\newtheorem{theorem}{Theorem}{\bfseries}{\normalfont}
{\bfseries}{\normalfont}
\newtheorem{proposition}{Proposition}{\bfseries}{\normalfont}
{\bfseries}{\normalfont}
{\bfseries}{\normalfont}
\newtheorem{obs}{Observation}{\bfseries}{\normalfont}
\newtheorem{lemma}{Lemma}{\bfseries}{\normalfont}
\newcommand{\oldqed}{}
\newcommand{\decprob}[3]{%
  \begin{center}%
    \begin{minipage}{0.9\linewidth}%
      \textsc{#1}\\
      \textbf{Input:} #2\\
      \textbf{Question:} #3
    \end{minipage}%
  \end{center}%
}
\def\NAT@spacechar{~}
\title{On the Parameterized and Approximation Hardness of \\ Metric Dimension}
\author{
Sepp Hartung and André Nichterlein \medskip \\  
  Institut f\"ur Softwaretechnik und Theoretische Informatik, \\ TU Berlin, Berlin, Germany \\
\small{\texttt{\{sepp.hartung,andre.nichterlein\}@tu-berlin.de}}}
\date{}
\newcounter{Bew1}
\newcounter{Bew2}
\newcommand{\appendixproof}[2]{
   \label{L\arabic{Bew1}}
   \gappto{\appendixProofText}{ \phantomsection \stepcounter{Bew2}\label{\arabic{Bew2}} \subsection{Proof~\textbf{\arabic{Bew2}} (#1)} #2}
} 
\newcommand{\appendixproofT}[2]{
   \label{L\arabic{Bew1}}
   \gappto{\appendixProofTextT}{ \phantomsection \stepcounter{Bew2}\label{\arabic{Bew2}} \subsection{Proof~\textbf{\arabic{Bew2}} (#1)} #2}
}
\DeclareMathOperator{\dist}{dist}
\newcommand{\MD}{\textsc{Metric Dimension}\xspace}
\newcommand{\DomSet}{\textsc{Dominating Set}\xspace}
\newcommand{\BipDomSet}{\textsc{Bipartite Dominating Set}\xspace}
\newcommand{\RedBlueDomSet}{\textsc{Red-Blue Dominating Set}\xspace}
\renewcommand{\P}{\mathcal{P}}
\newcommand{\TL}{\operatorname{TL}}
\newcommand{\TR}{\operatorname{TR}}
\newcommand{\BL}{\operatorname{BL}}
\newcommand{\BR}{\operatorname{BR}}
\newcommand{\M}{\operatorname{M}}
\crefname{rrule}{Rule}{Rules}
\crefname{obs}{Observation}{Observations}
\newcommand{\C}{\mathcal{C}}
\newcommand{\size}{k}
\newcommand{\addFactor}{\ensuremath{\frac{3}{2}}}
\newcommand{\halfaddFactor}{\ensuremath{\frac{1}{4}}}
\newcommand{\addFactorMinusOne}{\ensuremath{\frac{1}{2}}}
\newcommand{\yValue}{\ensuremath{10n^2}}
\begin{document}


\twocolumn[
  \begin{@twocolumnfalse}
    \maketitle
\begin{abstract}
\looseness=-1 The NP-hard \MD problem is to decide for a given graph~$G$ and a positive integer~$k$ whether there is a vertex subset of size at most~$k$ that \emph{separates} all vertex pairs in~$G$. 
Herein, a vertex~$v$ separates a pair $\{u,w\}$ if the distance (length of a shortest path) between $v$ and~$u$ is different from the distance of~$v$ and $w$. 
We give a polynomial-time computable reduction from the \BipDomSet problem to \MD on maximum degree three graphs such that there is a one-to-one correspondence between the solution sets of both problems. 
There are two main consequences of this: First, it proves that \MD on maximum degree three graphs is W[2\text{]}-complete with respect to the parameter~$k$. This answers an open question concerning the parameterized complexity of \MD posed by D\'{\i}az et~al.~[ESA'12\text{]} and already mentioned by Lokshtanov~[Dagstuhl seminar, 2009\text{]}.
Additionally, it implies that \MD cannot be solved in $n^{o(k)}$ time, unless the assumption $\text{FPT}\neq \text{W[1]}$ fails. This proves that a trivial $n^{O(k)}$ algorithm is probably asymptotically optimal.

Second, as \BipDomSet is inapproximable within $o(\log n)$, it follows that \MD on maximum degree three graphs is also inapproximable by a factor of $o(\log n)$, unless $\text{NP}=\text{P}$. This strengthens the result of Hauptmann et~al.~[JDA'12\text{]} who proved APX-hardness on bounded-degree graphs.
\bigskip

\end{abstract}
  \end{@twocolumnfalse}
]

\section{Introduction}
Given an undirected graph~$G=(V,E)$ a \emph{metric basis} of~$G$ is a vertex subset~$L\subseteq V$ such that each pair of vertices ~$\{u,w\}\subseteq V$ is \emph{separated} by~$L$, meaning that there is at least one~$v\in L$ such that $\dist(v,u)\neq \dist(v,w)$. 
Herein, ``$\dist(v,u)$'' denotes the length of a shortest path between~$v$ and~$u$. 
The corresponding \MD problem has been independently introduced by \citet{HM76} and \citet{Sla75}:%
\decprob{\MD \cite[GT61]{GJ79}}%
{An undirected graph~$G = (V,E)$ and an integer~$\size \geq 1$.}  
{Is there a metric basis of size at most~$k$?
}
The metric dimension of graphs (the size of a minimum-cardinality metric basis) finds applications in various areas including network discovery~\& verification~\cite{BEEHHMR06}, metric geometry~\cite{HM76}, robot navigation, coin weighing problems, connected joins in graphs, and strategies for the Mastermind game. 
We refer to  \citet{CHMPPSW07}, \citet{HMPSW10}, and \citet{BC11} for a more comprehensive list and a more complete bibliography on the extensive study on metric dimension.

There is a rich literature about the metric dimension of graphs, but little is known about the computational complexity of \MD.
It is known to be NP-hard and there is a linear-time algorithm for trees~\cite{KRR96}. 
From a polynomial-time approximation point of view, it has been shown to admit a $2\log n$-approximation~\cite{KRR96} and that it cannot be approximated within $o(\ln n)$, unless P=NP~\cite{BEEHHMR06}. \citet{HSV12}  showed that, unless $\text{NP}\subseteq \text{DTIME}(n^{\log\log n})$, there is no $(1-\epsilon)\ln n$ approximation algorithm for any $\epsilon>0$. Furthermore, they proved APX-hardness on bounded-degree graphs. \citet{DPSL12} showed that \MD remains NP-hard on bipartite graphs but becomes polynomial-time solvable on outerplanar graphs. Additionally, \citet{ELW12} examined the complexity of the vertex-weighted variant of \MD on various graph classes.

\paragraph{\bf Our Contribution.} 
We provide a polynomial-time computable reduction that maps an instance of \BipDomSet, consisting of a bipartite graph and an integer~$h$, to an equivalent \MD instance $(G,k)$ with $k=h+4$ and~$G$ having maximum degree three.

\looseness=-1 Since \BipDomSet is W[2]-hard~\cite{RS08Algorithmica}, our reduction proves that \MD is W[2]-hard with respect to~$k$ even on graphs with maximum degree three. Additionally, we prove W[2]-completeness. The question on the parameterized complexity of \MD (on general graphs) was posed by \citet{Lok09};
also \citet{DPSL12} pointed to this question. The W[2]-hardness of \MD showes that, unless the widely believed conjecture $FPT\neq W[2]$ fails, \MD is not \emph{fixed-parameter tractable}, that is, it cannot be solved within $f(k)\cdot |G|^{O(1)}$ time for any computable function~$f$. 
On the other hand, an algorithm that tests each size-$k$ subset being a metric basis runs in~$O(n^{k+2})$ time. 
However, our reduction together with the result that \BipDomSet cannot be solved in $n^{o(k)}$ time~\cite{CCF+05,RS08Algorithmica}, implies that \MD on an $n$-vertex graph cannot be solved in $n^{o(k)}$ time, unless FPT$=$W[1]. Thus the trivial $n^{O(k)}$-algorithm is (probably) asymptotically optimal.

\looseness=-1 Furthermore, as \DomSet cannot be approximated within a factor of~$o(\log n)$ unless $\text{NP}=\text{P}$~\cite{AS03}, it also follows that there cannot be an $o(\log n)$-factor approximation for \MD. This strengthens the APX-hardness result for bounded-degree graphs~\cite{HSV12} and it shows that the $2\log n$-approximation on general graphs~\cite{KRR96} is up to constant factors also optimal on bounded-degree graphs.

\paragraph{\bf Preliminaries.}
A problem that is shown to be \emph{W[1]- or W[2]-hard} is not fixed-parameter tractable, unless W[1] or W[2] is equal to the class FPT which consists of all fixed-parameter tractable problems. 
One can prove W[1]- or W[2]-hardness by means of a \emph{parameterized reduction} from a W[1]- or W[2]-hard problem. This is a mapping of an instance~$(I,k)$ of a problem~$\mathcal{A}$ in $h(k)\cdot |I|^{O(1)}$ time (for any computable~$h$) into an instance $(I',k')$ for~$\mathcal{B}$ such that $(I,k)\in \mathcal{A}\Leftrightarrow (I',k')\in \mathcal{B}$ and $k'\le g(k)$ for some~$g$. Our reduction is indeed a \emph{polynomial-time} computable parameterized reduction.
We refer to the monographs of \citet{DF99}, \citet{FG06}, and \citet{Nie06} for a detailed introduction to parameterized complexity analysis.

We use standard graph-theoretic notations. All the graphs are undirected and unweighted without self-loops. For a graph~$G=(V,E)$ with vertex set~$V$ and edge set~$E$ we set $n:=|V|$. A path~$P$ in~$G$ is a sequence of vertices $v_1-v_2-\ldots-v_s$ such that for $1\le i<s$ all $\{v_i,v_{i+1}\}\in E$. 
If there is a unique shortest path between two vertices~$u$ and~$v$, then we write just $u-v$ for this path without listing the intermediate vertices.
We write $\dist(v,u)$ for the length of a shortest path between $v$ and~$u$. Moreover, we set $\dist(v_1,v_2,\ldots,v_i)=\dist(v_1,v_2)+\dist(v_2,v_3)+\ldots+\dist(v_{i-1},v_i)$.

\paragraph{\bf Organization.} 
In the next section we describe our reduction and prove its correctness in~\autoref{sec:correctness}. We proceed by proving W[2]-completeness (\autoref{sec:w2-completness}) and, finally, in \autoref{sec:run-lower-bound} we prove the running time as well as the approximation lower bound. Some proofs are defered to the appendix.

\section{Construction of the Reduction}\label{sec:construction}
In this section we give a reduction from the W[2]-complete \BipDomSet problem~\cite{RS08Algorithmica} to \MD.
%
\decprob{\BipDomSet}%
{A bipartite graph~$G = (V_1\cup V_2,E)$ and an integer~$h \geq 1$.}  
{Is there a \emph{dominating set} of size at most~$h$, that is, a vertex subset~$V' \subseteq V_1\cup V_2$ such that $N[v]\cap V'\neq \emptyset$  for all $v\in V_1\cup V_2$?}
Since the distances from a vertex to all the vertices in a closed neighborhood of another vertex differ by at most three, $\log_3 \Delta$ is a lower bound on the metric dimension, where~$\Delta$ is the maximum degree. 
Avoiding large degrees was the main obstacle in the reduction below.

\begin{figure*}[ht]
	\begin{center}
		\def\Length{6}
		\def\LengthVertical{2}
		\def\Ydist{2.5}
		\def\yStretch{0.6}
		\begin{tikzpicture}[draw=black!75, scale=1,>=stealth']
			\tikzstyle{vertex}=[circle,draw=black!80,minimum size=6pt,inner sep=0pt]
			
			\node[vertex] (l1) at (-1,0.5) {$u^t_\ell$};
			\node[vertex] (l2) at (-1,-0.5) {};

			\node[vertex] (t-0) at (0,0) {};
			\foreach \i in {1,...,\Length}{
				\node[vertex] (t-\i) at (\i,0) {$u^t_{\i}$};
			}
			
			\pgfmathtruncatemacro{\value}{1 + \Length};	\let\A=\value
			\pgfmathtruncatemacro{\value}{2 + \Length};	\let\B=\value
			\pgfmathtruncatemacro{\value}{3 + \Length};	\let\C=\value
			\node[] (t-\A) at (\A,0) {$\ldots$};
			\node[vertex] (t-\B) at (\B,0) {$u^t_{n}$};
			\node[vertex] (t-\C) at (\C,0) {};

			\node[vertex] (r1) at (\C + 1,0.5) {$u^t_r$};
			\node[vertex] (r2) at (\C + 1,-0.5) {};

			\foreach \i in {1,...,\A}{
				\pgfmathtruncatemacro{\value}{1 + \i};	\let\nr=\value
				\path (t-\i) edge[very thick] (t-\nr);
			}
			\path (t-1) edge (t-0);
			\path (t-\B) edge (t-\C);
			\path (l1) edge (t-0);
			\path (l2) edge (t-0);
			\path (r1) edge (t-\C);
			\path (r2) edge (t-\C);

			\pgfmathtruncatemacro{\value}{5 + \Length};	\let\Xdist=\value
		\end{tikzpicture}
	\end{center}
\caption{Schematic illustration of the top-line of the skeletal structure. Bold edges indicate $y$-paths.}\label{fig:skeleton-line}
\end{figure*}
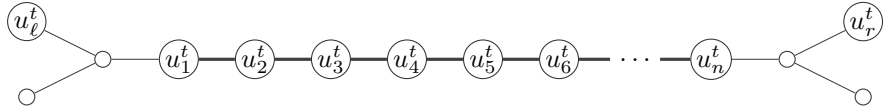

Let~$(G = (V_1\cup V_2,E), h)$ be a \BipDomSet-instance and let~$n := |V_1\cup V_2|$. We set $V:=V_1\cup V_2$ and fix a numbering $V_1=\{v_1,v_2,\ldots,v_s\}$ and $V_2=\{v_{s+1},\ldots,v_n\}$ such that for all $\{v_i,v_j\}\in E$ it holds that $j\ge i+3$. 
(Clearly, increasing $h$ by two, introducing two isolated vertices to~$V_1$, and numbering them by $v_{s+1}$ and~$v_{s+2}$ ensures this.)
We construct an equivalent instance~$(G' = (V',E'), k)$ of \MD with $k:=h+4$.

We start with a high-level description of the graph~$G'$: It consists of a \emph{skeletal structure} in which we insert a vertex-gadget for each vertex of~$G$ and an edge-gadget for each edge of~$G$.
Furthermore, there are four additional vertices in the skeletal structure that are forced to be in any metric basis and these four vertices separate all but $n$ vertex pairs in each vertex-gadget. Then, choosing a vertex in a vertex-gadget separates all of the $n$~vertex pairs in its own gadget plus all the pairs in the vertex gadgets that are ``adjacent'' by an edge-gadget to the chosen vertex-gadget.

Throughout the construction, several times we will connect two vertices~$\{u,v\}$ by a so-called \emph{$y$-path}, meaning that we insert a path of length~$y$ from~$u$ to~$v$. In all cases we make sure that the $y$-path is the unique shortest path between the endpoints and thus $u-v$ denotes this $y$-path. We set $y:=\yValue$ as we will assume that~$\halfaddFactor y>2n+2$. Intuitively, $y$-paths can be viewed as edges of weight~$y$.

We now describe the construction of~$G'$ in detail:
First, the skeletal structure is formed by $2n$ vertices $u^t_1,\ldots,u^t_n$ and $u^b_1,\ldots,u^b_n$ where all consecutive vertex pairs $\{u^t_{i},u^t_{i+1}\}$ and $\{u^b_i,u^b_{i+1}\}$ are connected by a $y$-path. 
We call the vertices~$u^t_1,u^t_n,u^b_1,u^b_n$ \emph{endpoints}. For each endpoint add a length-three path, a so-called $P_3$, and make the endpoint adjacent to the middle vertex. 
We call the first path $u^t_1-u^t_2-\ldots-u^t_n$ the \emph{top-line} and the second path $u^b_1-u^b_2-\ldots-u^b_n$ the \emph{bottom-line} both including the~$P_3's$. 
Additionally, let~$u^t_\ell$ be any degree-one vertex in the~$P_3$ attached to~$u^t_1$ and correspondingly let~$u^t_r,u^b_\ell,u^b_r$ be degree-one vertices in the~$P_3$'s attached  to~$u^t_n,u^b_1$, and $u^b_n$, respectively  (see \autoref{fig:skeleton-line}). (These are the four vertices that separate all but $n$ pairs in each vertex gadget.)

For each vertex~$v_i \in V$ we add the \emph{vertex-gadget}~$g^V_i$ to~$G'$ (see \autoref{fig:vertex-gadget}): 
Construct a cycle of length $2n+2$ and call two vertices~$a_i^t,a_i^b$ on the cycle with distance exactly $n+1$ the \emph{anchors} of the vertex gadget. Connect the \emph{top-anchor}~$a_i^t$ by a $y$-path to~$u^t_i$ and, symmetrically, connect the \emph{bottom-anchor} $a_i^b$ by a $y$-path to~$u^b_i$. 
There are two paths, each consisting of $n$~vertices between the anchors and we denote the vertices on these paths by $l^i_1,\ldots,l^i_n$ and $r^i_1,\ldots,r^i_n$, respectively.
The \emph{left-vertices} $l^i_1,\ldots,l^i_n$ remain degree-two vertices in~$G'$ whereas the \emph{right-vertices} $r^i_1,\ldots,r^i_n$ will be used in the following to connect the edge-gadgets.

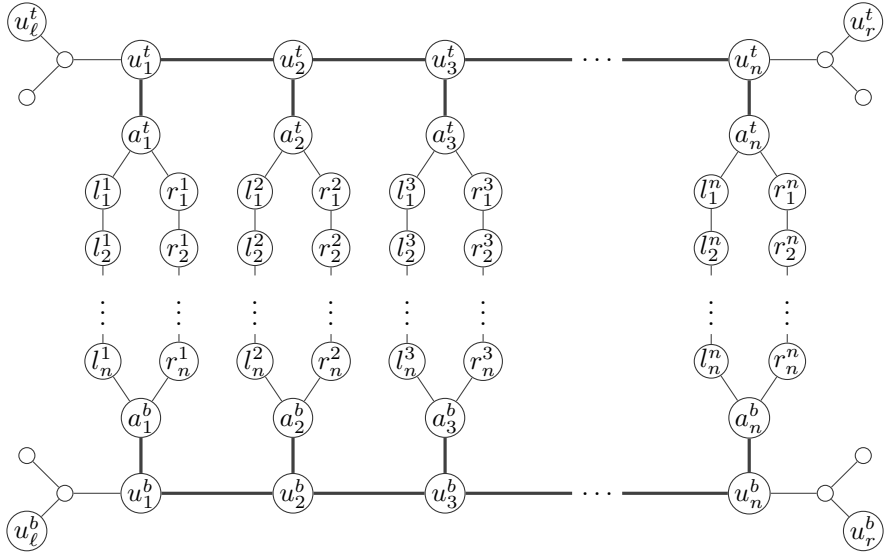
\begin{figure*}[t]
 	\begin{center}
		\def\Length{3}
		\def\LengthVertical{2}
		\def\Ydist{0}
		\def\yStretch{0.75}
		\def\xStretch{2}
		\begin{tikzpicture}[draw=black!75, scale=1,>=stealth']
			\tikzstyle{vertex}=[circle,draw=black!80,minimum size=6pt,inner sep=0pt]
			
			\pgfmathtruncatemacro{\value}{1 + \Length};					\let\A=\value
			\pgfmathtruncatemacro{\value}{2 + \Length};					\let\B=\value
			\pgfmathtruncatemacro{\value}{1 + \LengthVertical};			\let\AA=\value
			\pgfmathtruncatemacro{\value}{2 + \LengthVertical};			\let\BB=\value
			\pgfmathsetmacro{\value}{-2 -\yStretch - \yStretch*\BB};	\let\YPathDist=\value

			\foreach \i in {0,1} {
				\node[vertex] (l1\i) at (\xStretch - 1.5,0.5 + \i * \YPathDist) {\ifthenelse{\i=0}{$u^t_\ell$}{}};
				\node[vertex] (l2\i) at (\xStretch - 1.5,-0.5+ \i * \YPathDist) {\ifthenelse{\i=0}{}{$u^b_\ell$}};

				\node[vertex] (t-0-\i) at (\xStretch - 1,\i * \YPathDist) {};
				\foreach \j in {1,...,\Length}{
					\node[vertex] (t-\j-\i) at (\xStretch * \j,\i * \YPathDist) {\ifthenelse{\i=0}{$u^t_{\j}$}{$u^b_{\j}$}};
				}
				
				\pgfmathtruncatemacro{\value}{1 + \Length};	\let\A=\value
				\pgfmathtruncatemacro{\value}{2 + \Length};	\let\B=\value
				\pgfmathtruncatemacro{\value}{3 + \Length};	\let\C=\value
				\node[] 		(t-\A-\i) at (\xStretch * \A,\i * \YPathDist) {$\ldots$};
				\node[vertex] 	(t-\B-\i) at (\xStretch * \B,\i * \YPathDist) {\ifthenelse{\i=0}{$u^t_{n}$}{$u^b_{n}$}};
				\node[vertex] 	(t-\C-\i) at (\xStretch * \B + 1,\i * \YPathDist) {};
				
				\node[vertex] (r1\i) at (\xStretch * \B + 1.5,0.5 + \i * \YPathDist) {\ifthenelse{\i=0}{$u^t_r$}{}};
				\node[vertex] (r2\i) at (\xStretch * \B + 1.5,-0.5+ \i * \YPathDist) {\ifthenelse{\i=0}{}{$u^b_r$}};

				\foreach \j in {1,...,\A}{
					\pgfmathtruncatemacro{\value}{1 + \j};	\let\nr=\value
					\path (t-\j-\i) edge[very thick] (t-\nr-\i);
				}
				\path (t-1-\i) edge (t-0-\i);
				\path (t-\B-\i) edge (t-\C-\i);
				\path (l1\i) edge (t-0-\i);
				\path (l2\i) edge (t-0-\i);
				\path (r1\i) edge (t-\C-\i);
				\path (r2\i) edge (t-\C-\i);
			}
			
			\foreach \k in {1,...,\B} {
				\ifthenelse{\k=\A}{}
				{
					\pgfmathtruncatemacro{\value}{\xStretch * \k};	\let\Xdist=\value
					
					\node[vertex] (vt2\k) at (\Xdist,-1 + \Ydist) {\ifthenelse{\k=\B}{$a^t_n$}{$a^t_\k$}};

					\pgfmathtruncatemacro{\value}{1 + \LengthVertical};	\let\AA=\value
					\pgfmathtruncatemacro{\value}{2 + \LengthVertical};	\let\BB=\value

					\foreach \i in {1,2}{
						\foreach \j in {1,...,\LengthVertical}{
							\node[vertex] (v-\i-\j) at (\i + \Xdist - 1.5,-\yStretch*\j + \Ydist - 1)
								{%
									\ifthenelse{\i=1}{%
										{\ifthenelse{\k=\B}{$l^n_\j$}{$l^\k_\j$}}
									}{%
										{\ifthenelse{\k=\B}{$r^n_\j$}{$r^\k_\j$}}
									}%
								};
							
						}
						\node[] (v-\i-\AA) at (\i + \Xdist - 1.5,-\yStretch*\AA + \Ydist - 1) {$\vdots$};
						\node[vertex] (v-\i-\BB) at (\i + \Xdist - 1.5,-\yStretch*\BB + \Ydist - 1)
							{%
								\ifthenelse{\i=1}{%
									{\ifthenelse{\k=\B}{$l^n_n$}{$l^\k_n$}}
								}{%
									{\ifthenelse{\k=\B}{$r^n_n$}{$r^\k_n$}}
								}%
							};
					}

					\node[vertex] (vb1\k) at (\Xdist,-1 -\yStretch + \Ydist - \yStretch*\BB) {\ifthenelse{\k=\B}{$a^b_n$}{$a^b_\k$}};
					
					\foreach \i in {1,2}{
						\foreach \j in {1,...,\AA}{
							\pgfmathtruncatemacro{\value}{1 + \j};	\let\nr=\value
							\path (v-\i-\j) edge (v-\i-\nr);
						}
						\path (v-\i-1) edge (vt2\k);
						\path (v-\i-\BB) edge (vb1\k);
					}
					\path (t-\k-0) edge[very thick] (vt2\k);
					\path (vb1\k) edge[very thick] (t-\k-1);
				}
			}
		\end{tikzpicture}
	\end{center}
\caption{Schematic illustration of vertex-gadgets and their embedding into the skeletal structure. Bold edges indicate $y$-paths.}
\label{fig:vertex-gadget}
\end{figure*}


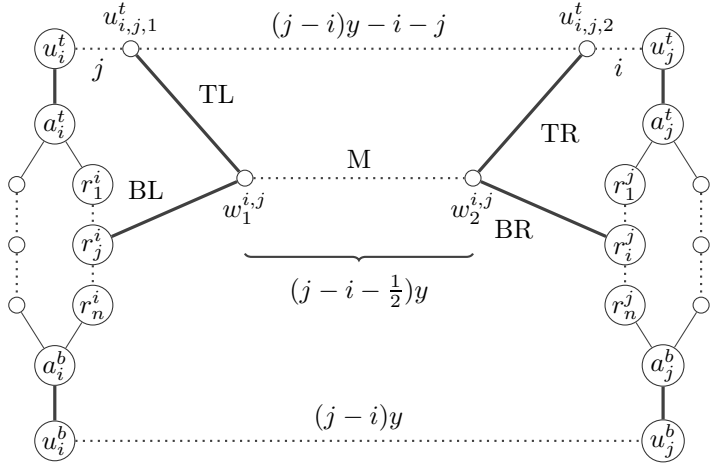
\begin{figure*}[t]
	\begin{center}
		\def\Xdist{8}
		\def\Ydist{0}
		\def\yStretch{0.8}
		\def\distWToGadget{2.5}
		\begin{tikzpicture}[auto,draw=black!75, scale=1,>=stealth',decoration=brace]
			\tikzstyle{vertex}=[circle,draw=black!80,minimum size=6pt,inner sep=0pt]
			\tikzstyle{vertex2}=[circle,draw=black!80,minimum size=15pt,inner sep=0pt]

			\pgfmathsetmacro{\value}{-2 -\yStretch * 4};	\let\YPathDist=\value
			
			\foreach \k in {0,1} {
				\pgfmathtruncatemacro{\value}{\k * \Xdist};	\let\xPos=\value
				\pgfmathtruncatemacro{\value}{\k * 3};	\let\xNr=\value

				\ifthenelse{\k=0}{
					\let\NrString=i
					\let\NrStringTwo=j
				}{
					\let\NrString=j
					\let\NrStringTwo=i
				}

				\node[vertex2] (t-\xNr-0) at (\xPos,0 +\Ydist) {\ifthenelse{\k=0}{$u^t_\NrString$}{$u^t_\NrString$}};
				\node[vertex2] (vt2\k) at (\xPos,-1 +\Ydist) {$a^t_\NrString$};

				\foreach \i in {1,2}{
					\pgfmathtruncatemacro{\value}{\i + \k};	\let\sum=\value
					\ifthenelse{\sum=2}{
						\node[vertex2] (v-\i-1-\k) at (\i + \xPos - 1.5,-\yStretch + \Ydist - 1) {$r^\NrString_1$};
						\node[vertex2] (v-\i-2-\k) at (\i + \xPos - 1.5,-2*\yStretch + \Ydist - 1) {$r^\NrString_\NrStringTwo$};
						\node[vertex2] (v-\i-3-\k) at (\i + \xPos - 1.5,-3*\yStretch + \Ydist - 1) {$r^\NrString_n$};
					}{
						\node[vertex] (v-\i-1-\k) at (\i + \xPos - 1.5,-\yStretch + \Ydist - 1) {};
						\node[vertex] (v-\i-2-\k) at (\i + \xPos - 1.5,-2*\yStretch + \Ydist - 1) {};
						\node[vertex] (v-\i-3-\k) at (\i + \xPos - 1.5,-3*\yStretch + \Ydist - 1) {};
					}
				}

				\node[vertex2] (vb1\k) at (\xPos,\YPathDist + 1 + \Ydist) {$a^b_\NrString$};
				\node[vertex2] (t-\xNr-1) at (\xPos,\YPathDist + \Ydist) {$u^b_\NrString$};
				
				\foreach \i in {1,2}{
					\pgfmathtruncatemacro{\value}{\i + \k};	\let\sum=\value
					\foreach \j in {1,...,2}{
						\pgfmathtruncatemacro{\value}{1 + \j};	\let\nr=\value
						\path (v-\i-\j-\k) edge[dotted, thick] (v-\i-\nr-\k);
					}
					\path (v-\i-1-\k) edge (vt2\k);
					\path (v-\i-3-\k) edge (vb1\k);
				}
				\path (vb1\k) edge[very thick] (t-\xNr-1);
				\path (vt2\k) edge[very thick] (t-\xNr-0);
			}

			\node[vertex] (t-1-0) at (1 , \Ydist) {};
			\node[vertex] (t-2-0) at (\Xdist-1 , \Ydist) {};
			\path (t-1-0) edge[dotted, thick] node {$ j $} (t-0-0);
			\path (t-1-0) edge[dotted, thick] node {$ (j-i)y - i - j $}  (t-2-0);
			\path (t-3-0) edge[dotted, thick] node {$ i $} (t-2-0);
			
			\node[vertex] (w1) at (\distWToGadget,0.33*\YPathDist + \Ydist) {};
			\node[vertex] (w2) at (\Xdist - \distWToGadget,0.33*\YPathDist + \Ydist) {};
			\draw[-,decorate,thick] (\Xdist - \distWToGadget,0.33*\YPathDist + \Ydist - 1) -- (\distWToGadget,0.33*\YPathDist + \Ydist - 1);
			\node at (0.5 * \Xdist,0.33*\YPathDist + \Ydist - 1.5) {$(j-i-\frac{1}{2})y$};

			\node[above of=t-1-0, yshift=-6mm]  {$u^t_{i,j,1}$};
			\node[above of=t-2-0, yshift=-6mm]  {$u^t_{i,j,2}$};

			\node[below of=w1, yshift=6mm]  {$w^{i,j}_1$};
			\node[below of=w2, yshift=6mm]  {$w^{i,j}_2$};
			
			\path (t-0-1) edge[dotted, thick] node {$ (j-i)y $} (t-3-1);
			\path (w1) edge[dotted, thick] node {$ \M $} (w2);
			\path (t-1-0) edge[very thick] node {$ \TL $} (w1);
			\path (v-2-2-0) edge[very thick] node {$ \BL $} (w1);
			\path (t-2-0) edge[very thick] node {$ \TR $} (w2);
			\path (v-1-2-1) edge[very thick] node {$ \BR $} (w2);
		\end{tikzpicture}
	\end{center}
 	\caption{A schematic illustration of an edge-gadget~$g^E_{i,j}$ for the edge $\{v_i,v_j\}$. Dotted edges indicate paths of length more than one. The concrete length of these paths is indicated by the labels next to the edge. Bold edges indicate $y$-paths. The edge-gadget consists of the five parts denoted by~$\BL$ (bottom left), $\BR$ (bottom right), $\M$ (middle), $\TL$ (top right), and~$\TR$ (top right).}
	\label{fig:edge-gadget}
\end{figure*}

Finally, for all edges~$\{v_i,v_j\} \in E$ with $i<j$ insert an edge-gadget~$g^E_{i,j}$ into~$G'$ (see \autoref{fig:edge-gadget}): Add a path of length~$(j-i+\addFactor)y$ between the two right-vertices~$r^i_j$ and~$r^j_i$.
Denote with~$w^{i,j}_1$ the vertex on the path having distance~$y$ to~$r^i_j$ and denote with~$w^{i,j}_2$ the vertex on the path having distance~$y$ to~$r^j_i$. 
Furthermore, denote with~$u^t_{i,j,1}$ ($u^t_{i,j,2}$) the vertex in the top-line that lies between~$u^t_{i}$ and~$u^t_{j}$ and has distance~$j$ to~$u^t_{i}$ (distance~$i$ to~$u^t_{j}$).
Then  connect~$w^{i,j}_1$ by a $y$-path to~$u^t_{i,j,1}$ and also connect~$w^{i,j}_2$ by a $y$-path to~$u^t_{i,j,2}$. 
This completes the construction of~$G'$.

\section{Correctness of the Reduction} \label{sec:correctness}
Let~$(G=(V_1\cup V_2,E),h)$ be an instance of \BipDomSet and let $(G'=(V',E'),k)$ with $k=h+4$ be the corresponding instance of \MD that is constructed by the reduction above. Clearly, the reduction is polynomial-time computable and thus it remains to show that~$G$ has a dominating set of size~$h$ iff $G'$ has a metric basis of size~$k$. We first give an informal description of the basic ideas behind.

\subsection{Basic Ideas and Intuition}
First, observe that one has to choose at least one of the two degree-one vertices in the~$P_3$'s attached to each of the endpoints of the top- and bottom-line into any metric basis and a minimum-size metric basis would never take both. 
We shall show that $\{u^t_\ell,u^t_r,u^b_\ell,u^b_r\}$ separate each vertex pair in~$G'$ except the vertex pair $\{l^i_j,r^i_j\}$ for all $1\le i,j\le n$. 
Towards this the main observation is that a shortest path from a vertex in the skeletal structure to a vertex that is either in a vertex gadget or also in the skeletal structure would never enter an edge-gadget.
For example, traversing an edge-gadget~$g^E_{i,j}$ by entering it at~$u^t_{i,j,1}$ and leaving it at~$r^j_i$ gives a path of length~$(|j-i|+\addFactor)y$.
However, the path $u^t_i-u^t_j-a^t_j-r^j_i$ that follows the top-line is of length at most~$(|j-i|+1)y+n$ and, thus, is shorter (recall that~$\halfaddFactor y>2n+2$).
From this the separation of the vertices in the skeletal structure and most of the vertices in the vertex-gadgets can be deduced.
The reason why  $\{u^t_\ell,u^t_r,u^b_\ell,u^b_r\}$ cannot separate $\{l^i_j,r^i_j\}$ is that a shortest path starting in one of them has to enter a vertex-gadget~$g^V_i$ always via the anchors~$\{a^t_i,a^b_i\}$ and thus cannot distinguish between~$l^i_j$ and~$r^i_j$.


The fact that $\{u^t_\ell,u^t_r,u^b_\ell,u^b_r\}$ separate each vertex pair consisting only of edge-gadget vertices is far from being obvious and proving it requires extensive case distinctions (see \autoref{lem:twoEdgeGadgetSeparated}).
Moreover, we prove that all vertices in a metric basis of~$G'$ except $\{u^t_\ell,u^t_r,u^b_\ell,u^b_r\}$ are chosen from the vertex-gadgets and that the corresponding vertices form a dominating set in~$G$. Towards this it is crucial that the constant $\addFactor$ in the definition of edge-gadgets is between one and two: 
Clearly, taking~$r^i_1$ into a metric basis separates all pairs $\{l^i_j,r^i_j\}$ in its own gadget. 
The key point is that it separates also all pairs in $g^V_j$ if the edge-gadget $g^E_{i,j}$ exists: A path from $r^i_1$ to some $r^j_s$ and also to~$l^j_s$ via traversing~$g^E_{i,j}$ is of length at most $(|j-i|+\addFactor)y+3n$. 
The only ``alternative path'' from~$r^i_1$ to 
$r^j_s$ is by  leaving~$g^V_i$ via~$a^t_i$ following the top-line $u^t_i-u^t_j$, entering  $g^V_j$ via~$a^t_j$ and then taking the length-$s$ path to~$r^j_s$. In total this path has length at least $(|j-i|+2)y$ and thus traversing~$g^E_{i,j}$ is shorter.
Hence, because $\halfaddFactor y>2n+2$, the path traversing~$g^E_{i,j}$ is a shortest path. The idea behind is that leaving and entering vertex-gadgets via the anchors costs~$2y$ and traversing $g^E_{i,j}$ only costs $\addFactor y$ more than the top- or (bottom-)line path $u^t_i-u^t_j$ ($u^b_i-u^b_j$).
Moreover, $r^i_1$ only separates pairs in ``adjacent'' vertex-gadgets since a shortest path starting in~$r^i_1$ never traverses two edge-gadgets. This would cause at least two times the additional cost of $\addFactor y$ whereas leaving and entering~$g^V_i$ to and from the top-line only costs~$2y$.

We next give a formal proof of the correctness of the reduction. 

\subsection{General Observations and Additional Notation}

We first introduce some additional notation for edge-gadgets.

\emph{Notation:} For an edge-gadget $g^E_{i,j}$ the four vertices~$\{r^i_j, r^j_i, u^t_{i,j,1},u^t_{i,j,2}\}$ are called \emph{entrance-vertices} (see \autoref{fig:edge-gadget}). Moreover, we partition~$g^E_{i,j}$ into five parts: The $y$-path from~$w^{i,j}_1$ to~$r^i_j$ is the $\BL$- (bottom left) part, the $\TL$- (top left) part is the $y$-path between~$w^{i,j}_1$ and~$u^t_{i,j,1}$, the $\TR$- (top right) part is the $y$-path between~$w^{i,j}_2$ and~$u^t_{i,j,2}$, and the $\BR$- (bottom right) part is the $y$-path between~$w^{i,j}_2$ and~$r^j_i$. Part~$\M$ (middle) contains the remaining vertices, that is, the vertices between~$w^{i,j}_1$ and~$w^{i,j}_2$ including~$w^{i,j}_1$ and~$w^{i,j}_2$. 

A path \emph{enters} (\emph{leaves}) an edge-gadget~$g^E_{i,j}$ via a vertex~$v$ ($u$) if there are two consecutive vertices $v-u$ on it, such that both are contained in~$g^E_{i,j}$ and $v$ ($u$) is an entrance vertex of~$g^E_{i,j}$.
We say that an edge-gadget~$g^E_{i,j}$  is \emph{traversed} by a path~$P$ if it contains a subpath consisting only of vertices in~$g^E_{i,j}$ that starts with entering~$g^E_{i,j}$, contains the M-part, and ends with leaving~$g^E_{i,j}$.
Observe that without cycles there are only four different ways on how to traverse~$g^E_{i,j}$ and each is of length $(|j-i|+\addFactor)y$. 
The following observations are straightforward.
\begin{obs}\label{obs:traversing-edge-gadget}%
A path that enters and afterwards leaves an edge-gadget without traversing it is more than $\halfaddFactor y$ longer than a shortest path with the same endpoints.
\end{obs}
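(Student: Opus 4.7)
The plan is to bound from below the length of the portion of $P$ inside $g^E_{i,j}$ and from above the distance $\dist(a,b)$ in $G'$ between the entering vertex $a$ and the leaving vertex $b$, and then conclude via the triangle inequality. One may assume $P$ is simple, since a non-simple walk entering and leaving the gadget can be shortened to a simple path with the same endpoints that still enters and leaves (or avoids the gadget altogether, in which case the claim is vacuous).

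Because $P$ does not traverse the M-part, its subpath inside $g^E_{i,j}$ is confined to one side of $\M$. The ``left'' side $\BL \cup \TL$ has $r^i_j$ and $u^t_{i,j,1}$ as its only entrance vertices and meets only at $w^{i,j}_1$; the ``right'' side $\BR \cup \TR$ analogously connects $r^j_i$ and $u^t_{i,j,2}$ through $w^{i,j}_2$. Since the internal vertices of the $\BL$-, $\TL$-, $\BR$-, $\TR$-paths all have degree~$2$ in~$G'$, a simple subpath on one side from $a$ to $b\neq a$ must pass through the common meeting vertex $w^{i,j}_1$ (or $w^{i,j}_2$), which is at distance exactly $y$ from both entrances on that side. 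Hence the in-gadget subpath has length at least $2y$. On the other hand, a short $a$-to-$b$ route outside $g^E_{i,j}$ is easy to exhibit by going through the anchor of the adjacent vertex-gadget and along the skeleton: for the left case the path $r^i_j - a^t_i - u^t_i - u^t_{i,j,1}$ has length $j + y + j \le y + 2n$, and the right case is symmetric. Thus $\dist(a,b) \le y + 2n$.

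Combining the two estimates via the triangle inequality, the three subpieces of $P$ (from its endpoint $u$ to $a$, inside the gadget from $a$ to $b$, and from $b$ to its endpoint $v$) give $|P| \ge \dist(u,a) + 2y + \dist(b,v)$, while $\dist(u,v) \le \dist(u,a) + \dist(a,b) + \dist(b,v) \le \dist(u,a) + (y+2n) + \dist(b,v)$. Subtracting yields $|P| - \dist(u,v) \ge 2y - (y + 2n) = y - 2n > \halfaddFactor y$, where the last inequality uses the standing assumption $\halfaddFactor y > 2n + 2$. The only nontrivial step is verifying the short outside route, which amounts to reading off the correct distances from the skeleton and the vertex-gadget; the rest is a clean degree-$2$ and triangle-inequality bookkeeping.
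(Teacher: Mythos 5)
Your proof is correct and follows essentially the same route as the paper's: identify the in-gadget detour as having length (at least) $2y$ because it must pass through $w^{i,j}_1$ or $w^{i,j}_2$, compare with the outside route $r^i_j-a^t_i-u^t_i-u^t_{i,j,1}$ (resp.\ its right-hand analogue) of length at most $y+2n$, and conclude $y-2n>\halfaddFactor y$. You are merely a bit more explicit than the paper about reducing to a simple path and about the triangle-inequality bookkeeping for arbitrary endpoints.
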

{\begin{proof}
Let~$P$ be a path that starts with entering an edge-gadget~$g^E_{i,j}$ and ends with leaving it but does not traverse it. 
If~$P$ does not contain any cycle, then it is equal either to~$u^t_{i,j,1}-w^{i,j}_1-r^i_j$ or to $u^t_{i,j,2}-w^{i,j}_2-r^j_i$. 
Thus $P$ is of length~$2y$. 
However, the paths $u^t_{i,j,1}-u^t_i-a^t_i-r^i_j$ and $u^t_{i,j,2}-u^t_j-a^t_j-r^j_i$ are both of length at most~$y+2n$.
Thus~$P$ is by $y - 2n > \halfaddFactor y$ longer. \oldqed
\end{proof}}
Next we prove that any shortest path traverses at most one edge-gadget.

\begin{obs}\label{lem:traversing-edge-gadgets}
If a path is at most $\addFactorMinusOne y$ longer than a shortest path with the same endpoints, then it traverses at most one edge-gadget.
\end{obs}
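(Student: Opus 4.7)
I prove the contrapositive: if $P$ is a $u$-to-$v$ path in $G'$ that traverses at least two edge-gadgets, then $|P| > \dist(u,v) + \frac{1}{2}y$. By picking the first two traversed gadgets, I may assume $P$ contains two \emph{consecutive} traversals, of $g^E_{i_1,j_1}$ and $g^E_{i_2,j_2}$, with entry/exit vertices $(e_1, e_1')$ and $(e_2, e_2')$, no further traversal on the intermediate segment, and each traversal length $L_a = (j_a - i_a + \frac{3}{2})y$.

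The first step is to enumerate, for a single edge-gadget $g^E_{i,j}$, the four traversal types obtained by pairing the entrances $\{u^t_{i,j,1}, u^t_{i,j,2}, r^i_j, r^j_i\}$ across the M-part: $\BL$-$\BR$, $\TL$-$\TR$, $\BL$-$\TR$, and $\TL$-$\BR$. For each type I compute the length $d_{sk}(e, e')$ of the shortest \emph{skeletal} alternative---the shortest $e$-to-$e'$ path in $G'$ avoiding the interior of this gadget, routed via the two incident anchors and the top- or bottom-line. Using $\frac{1}{4}y > 2n+2$, a short calculation gives: $\BL$-$\BR$ is the only efficient type (traversal shorter than $d_{sk}$ by $\frac{1}{2}y + i + j$), $\TL$-$\TR$ wastes $\frac{3}{2}y + i + j$, and $\BL$-$\TR$, $\TL$-$\BR$ each waste $\frac{1}{2}y - (j-i)$.

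\emph{Case A (at least one traversal is not $\BL$-$\BR$).} If a traversal is $\TL$-$\TR$, replacing it alone by its skeletal alternative yields a $u$-$v$ walk of length $|P| - (\frac{3}{2}y + i + j)$, which is smaller than $|P| - \frac{1}{2}y$ and thus already forces $|P| > \dist(u,v) + \frac{1}{2}y$. If a traversal is $\BL$-$\TR$ or $\TL$-$\BR$, a single replacement only saves $\frac{1}{2}y - (j-i)$, which may fall short, so I combine this slack with the intermediate segment: using $|P| \ge \dist(u,e_1) + L_1 + \dist(e_1', e_2) + L_2 + \dist(e_2', v)$ together with the triangle upper bound $\dist(u,v) \le \dist(u,e_1) + \dist(e_1, e_2') + \dist(e_2', v)$ (routed along the top-line past both gadgets, or through a single shortcut gadget if available), direct arithmetic with the type-specific costs from the first step yields $|P| - \dist(u,v) \ge \frac{3}{2}y - O(n) > \frac{1}{2}y$.

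\emph{Case B (both traversals are $\BL$-$\BR$).} This is the main case and the main obstacle: each traversal is individually optimal, so the slack must be charged to the combined effect of two traversals plus the intermediate segment. The same two bounds reduce the goal to
\[
  L_1 + L_2 + \dist(r^{j_1}_{i_1}, r^{i_2}_{j_2}) \;>\; \dist(r^{i_1}_{j_1}, r^{j_2}_{i_2}) + \tfrac{1}{2}y .
\]
The crucial auxiliary ingredient is a horizontal-displacement lemma: any path in $G'$ between two distinct vertex-gadgets $g^V_\alpha$ and $g^V_\beta$ has length at least $(|\beta - \alpha| + \frac{3}{2})y - O(n)$; this $\frac{3}{2}y$ lower-order overhead is attained either by a top-line route (two anchor $y$-paths plus $|\beta - \alpha|$ line $y$-paths, totalling $|\beta-\alpha|+2$ $y$-units) or by a single $\BL$-$\BR$ traversal (two arm $y$-paths plus an M-part worth $|\beta-\alpha| - \frac{1}{2}$ $y$-units, totalling $|\beta-\alpha|+\frac{3}{2}$), while longer detours through multiple gadgets only pay more. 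Substituting this bound for the intermediate $\dist(r^{j_1}_{i_1}, r^{i_2}_{j_2})$ and using the top-line or single-gadget upper bound for $\dist(r^{i_1}_{j_1}, r^{j_2}_{i_2})$, the arithmetic collapses to an excess of at least $\frac{5}{2}y - O(n)$ when $j_1 < i_2$ (disjoint gadgets) and at least $y - O(n)$ when $j_1 = i_2$ (shared vertex-gadget), both comfortably exceeding $\frac{1}{2}y$. The main technical challenge is precisely the horizontal-displacement lemma: it relies on the fact that each gadget traversal costs its column displacement plus a fixed $\frac{3}{2}y$ overhead, so no concatenation of traversals can beat the per-column rate achieved by a single shortcut or by the top-line.
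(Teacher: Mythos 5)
Your overall strategy---contrapositive, reduction to two consecutive traversals, classification of the four traversal types, and the slack inequality $|P|-\dist(u,v)\ge L_1+L_2+\dist(e_1',e_2)-\dist(e_1,e_2')$---is structurally reasonable, but it rests on one load-bearing step that is asserted rather than proved: the ``horizontal-displacement lemma'' stating that every path between $g^V_\alpha$ and $g^V_\beta$ has length at least $(|\beta-\alpha|+\tfrac{3}{2})y-O(n)$. This is not a small omission. An arbitrary path between two vertex-gadgets may itself enter, leave, and traverse any number of edge-gadgets, so lower-bounding its length requires precisely the control over multi-traversal paths that the observation you are proving is meant to supply; as written, the argument risks circularity, and you flag this lemma as ``the main technical challenge'' without resolving it. The lemma is moreover needed in your Case~A as well as Case~B: when the two gadgets are far apart, the single-replacement saving of $\tfrac{1}{2}y-(j-i)$ together with the triangle bound does not close the gap unless the intermediate distance $\dist(e_1',e_2)$ is bounded from below. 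There is also a computational slip: the $\TL$--$\BR$ traversal wastes $\tfrac{1}{2}y+(j-i)$, not $\tfrac{1}{2}y-(j-i)$, because $u^t_{i,j,1}$ sits at distance $j$ from $u^t_i$ while $u^t_{i,j,2}$ sits at distance $i$ from $u^t_j$; this slip happens to be in the safe direction.

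The paper's proof needs none of this machinery. It restricts attention to the subpath of $P$ from the first entrance of the first traversed gadget $g^E_{i,j}$ to the last exit of the second traversed gadget $g^E_{i',j'}$, lower-bounds that subpath simply by the sum of the two traversal lengths $(j-i+\tfrac{3}{2})y+(j'-i'+\tfrac{3}{2})y$ (every traversal, of whichever of the four types, has exactly this length), and exhibits a single explicit alternative route along the top-line between the same endpoints of length about $(i'-i+3)y+O(n)$; no classification of traversal types and no global distance estimates are required. If you want to keep your plan, the cheapest repair is to replace the horizontal-displacement lemma by the much weaker statement you actually need: the intermediate segment between the two consecutive traversals contains no traversal by construction, so (after discarding enter-and-leave detours via \autoref{obs:traversing-edge-gadget}) it can be bounded from below by a direct anchor-and-line accounting inside the skeleton and vertex-gadgets, with no recursion on edge-gadgets.
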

{\begin{proof}
Assume that there is a path~$P$ in $G'=(V',E')$ that  starts with traversing an edge-gadget~$g^E_{i,j}$ and ends with traversing~$g^E_{i',j'}$.  Then $P$ has length at least~$(j-i+\addFactor+ j'-i'+\addFactor)y$.
We assume that it starts traversing~$g^E_{i,j}$ either in~$r^i_j$ or in~$u^t_{i,j,1}$ (the other entrance vertices are completely symmetric). Then~$P$ ends either in $r^{i'}_{j'}$ or~$u^t_{i',j',1}$.  
However, the path $r^i_j-a^t_i-u^t_i-\ldots-u^t_{i'}$ and also $u^t_{i,j,1}-u^t_{i'}$ both have length at most~$j+(1+i'-i)y$ and $u^t_{i'}$ has distance at most~$2y$ to $u^t_{i',j',1}$ and~$r^{i'}_{j'}$. 
Hence, there is path that avoids traversing~$g^E_{i,j}$ with distance at most $j+(1+i'-i)y<(j-i+\addFactor+ j'-i'+\addFactor)y$, implying that~$P$ is more than~$\addFactorMinusOne y$ longer than a shortest path.%
\oldqed%
\end{proof}}%
We next prove that for all vertices in~$G'$ except for the vertices contained in an edge-gadget it holds that a shortest path to a vertex on the top- or bottom-line never contains a vertex in an edge-gadget.

\begin{lemma} \label{lem:skel-distances}
In~$G'$ the following  holds:
	
\begin{enumerate}[i)]
	\item \label{prop:dist-on-one-line}For all $1\le i<j\le n$ the path along the top-line (bottom-line) is a shortest path from $u^t_i$ to $u^t_j$ ($u^b_i$ to $u^b_j$). It has length $(j-i)y$.

\item \label{prop:dist-on-different-lines} For all $1\le i,j\le n$ the following paths of length $(|j-i|+2)y+n+1$ are for all $\min\{i,j\}\le s\le \max\{i,j\}$ the only shortest paths between $u^t_i$ and $u^b_j$:
	\begin{align*}
u^t_i-u^t_{i+1}-\ldots-u^t_s-a^t_s-r^s_1-r^s_2-\ldots\\
-r^s_n-a^b_s-u^b_s-u^b_{s+1}-u^b_j\\
u^t_i-u^t_{i+1}-\ldots-u^t_s-a^t_s-l^s_1-l^s_2-\ldots\\
-l^s_n-a^b_s-u^b_s-u^b_{s+1}-u^b_j.
\end{align*}%
\end{enumerate}
\end{lemma}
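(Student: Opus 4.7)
I would prove both items by a structural case analysis on how a path can leave the top- and bottom-lines, combining the slack $\halfaddFactor y > 2n+2$ with \autoref{obs:traversing-edge-gadget} and \autoref{lem:traversing-edge-gadgets}.

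For \emph{(i)}, the exhibited top-line path consists of $j-i$ consecutive $y$-paths, so it has length $(j-i)y$. For optimality, I take an arbitrary $u^t_i$-$u^t_j$-path $P$ and note that every top-line vertex is either a named $u^t_s$, a degree-$2$ interior vertex of a $y$-path, or an edge-gadget attachment $u^t_{i',j',k}$. Hence the only ways for $P$ to leave the top-line are through the $y$-path down to some anchor $a^t_s$ or through the $y$-path out to some $w^{i',j'}_k$. In the first case $P$ must return, and since the only top-side connection of $g^V_s$ is $a^t_s$ itself, re-emerging adds at least $2y$; any excursion to $a^b_s$ and back is even more expensive. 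In the second case, \autoref{obs:traversing-edge-gadget} handles entering and leaving an edge-gadget without traversing, while traversing $g^E_{i',j'}$ fully costs $(j'-i'+\addFactor)y$ and exceeds the corresponding top-line segment between its two attachment points by more than $\addFactor y$. In every case the detour costs at least $\addFactor y > 2n+2$, so $P$ is strictly longer than $(j-i)y$ unless it uses the top-line path verbatim.

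For \emph{(ii)}, the key observation is that edge-gadgets have no direct connection to the bottom-line: their four entrance vertices are $u^t_{i,j,1}$, $u^t_{i,j,2}$ (top-line) and $r^i_j$, $r^j_i$ (inside vertex-gadgets). Any $u^t_i$-$u^b_j$-path must therefore cross some vertex-gadget $g^V_s$ from its top anchor $a^t_s$ to its bottom anchor $a^b_s$. Inside $g^V_s$ the two anchors lie on a cycle of length $2n+2$ at distance exactly $n+1$, and the only two internal $a^t_s$-$a^b_s$-paths of that length are $a^t_s{-}l^s_1{-}\cdots{-}l^s_n{-}a^b_s$ and $a^t_s{-}r^s_1{-}\cdots{-}r^s_n{-}a^b_s$. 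Together with the two $y$-paths $u^t_s{-}a^t_s$ and $a^b_s{-}u^b_s$ this contributes $2y+n+1$. Using (i) to bound the top-line segment $u^t_i$ to $u^t_s$ by $|i-s|y$ and the bottom-line segment $u^b_s$ to $u^b_j$ by $|j-s|y$, the total is at least $(|i-s|+|j-s|+2)y+n+1$, which attains its minimum value $(|j-i|+2)y+n+1$ precisely when $\min\{i,j\}\le s\le\max\{i,j\}$. Equality additionally forces no detours on either line and exactly one of the two column-choices inside $g^V_s$, pinning down the two families listed.

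The main obstacle I anticipate is ruling out hybrid routes that combine edge-gadgets with vertex-gadget crossings, most notably the route that enters $g^V_s$ via $a^t_s$, walks partway inside $g^V_s$ to some $r^s_t$, traverses the edge-gadget $g^E_{s,t}$ to $r^t_s$, and finishes by walking through $g^V_t$ down to $a^b_t$. A direct calculation shows such a route costs at least $(|j-i|+2+\addFactor)y + O(n)$, which loses the target bound by more than $\addFactorMinusOne y > 2n+2$; the same estimate, iterated in the spirit of \autoref{lem:traversing-edge-gadgets}, discards any path that uses two or more edge-gadgets. Once these hybrid cases are eliminated the remaining bookkeeping — that the top- and bottom-line segments are themselves shortest (via (i)) and that the internal gadget walk is one of the two advertised columns — is routine.
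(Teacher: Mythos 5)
Your argument is correct and, for part~(i), essentially coincides with the paper's: both are exhaustive detour analyses showing that any departure from the top-line (into a vertex-gadget via some $a^t_s$, or into an edge-gadget) costs at least an extra $\addFactor y$ minus lower-order terms, with the paper additionally invoking a top/bottom symmetry argument to rule out excursions to the bottom-line. For part~(ii) your decomposition is genuinely a bit different: the paper argues by contradiction, using~(i) to show that the path's intersections with the top- and bottom-line are contiguous, identifies the last top-line vertex $u^t_\alpha$ and first bottom-line vertex $u^b_\beta$, and then cases on whether an edge-gadget is traversed in between; you instead observe that the bottom-line is reachable only through some $a^b_s$ from inside $g^V_s$, which yields the direct lower bound $(|i-s|+|j-s|+2)y+n+1$ and pins down the equality cases at once. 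Your route is arguably cleaner for establishing uniqueness of the shortest paths, at the price of having to dispose explicitly of the hybrid routes that enter $g^V_s$ or $g^V_t$ through an edge-gadget entrance $r^s_t$ rather than through an anchor --- a complication the paper's $\alpha,\beta$-bookkeeping absorbs into a single edge-gadget case. One quantitative caveat: your blanket claim that every hybrid route loses by more than $\addFactorMinusOne y$ is not quite right for all variants; for instance, the route $u^t_i - u^t_{s,t,1} - w^{s,t}_1 - w^{s,t}_2 - r^t_s - \ldots - a^b_t - u^b_t - u^b_j$ costs only about $(|j-i|+\tfrac{5}{2})y + O(n)$, i.e.\ it loses by roughly $\addFactorMinusOne y - O(n)$ rather than by more than $\addFactorMinusOne y$. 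Since $\halfaddFactor y > 2n+2$ this margin is still strictly positive, so the conclusion stands, but a full write-up should enumerate these entry/exit combinations rather than rely on the single hybrid route you compute.
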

{\begin{proof}
 \emph{[Proof of i):]} We prove the claim for $i=1$ and $j=n$ on the top-line. As the vertices for all other choices of $i$ and~$j$  also lie on the top-line, this implies the correctness in all other cases.  \autoref{lem:skel-distances}(\ref{prop:dist-on-one-line}) can be analogously proven for the bottom-line. 

Assume that there is a shortest path~$P$ from $u^t_1$ to~$u^t_n$ that does not follow the top-line. We first show that~$P$ traverses at least one edge-gadget: Because the distances on the top- and the bottom-line are completely symmetric, a shortest path never starts on the top-line enters at some point the bottom-line and then later on re-enters the top-line. Hence, when leaving the top-line a shortest path enters an edge-gadget and, by \autoref{obs:traversing-edge-gadget}, it traverses it, say~$g^E_{i,j}$.

Since $P$ is a shortest path and traverses by~\autoref{lem:traversing-edge-gadgets} only~$g^E_{i,j}$, it follows that~$P$ enters~$g^E_{i,j}$ via $u^t_{i,j,1}$ and leaves it via~$u^t_{i,j,2}$. This subpath in~$P$ is of length $(j-i+\addFactor)y$. Contradictorily, the path from $u^t_{i,j,1}$ to~$u^t_{i,j,2}$ along the top-line is of length less than $(j-i)y$.

\emph{[Proof of ii):]} As the other case can be proven completely analogously, we prove the claim only for the case $i\le j$. Furthermore, for any choice of $i\le s\le j$ the corresponding path has length $(|j-i|+2)y+n+1$. Thus it remains to prove that every other path is longer.

Suppose towards a contradiction that there is a path~$P'$ from $u^t_i$ to~$u^b_j$ of length at most $(j-i+2)y+n+1$ that is different from the paths described in \autoref{lem:skel-distances}(\ref{prop:dist-on-different-lines}). By \autoref{lem:skel-distances}(\ref{prop:dist-on-one-line}) it holds that if there are two vertices in~$P'$ that lie on the top-line (bottom-line), then all vertices on the subpath between them also lie on the top-line (bottom-line, resp.). Thus, there are two vertices $u^t_\alpha$ and $u^b_\beta$ that are both in~$P'$ but $u^t_{\alpha+1},u^b_{\beta-1}\notin P'$. 
If on the subpath from $u^t_\alpha$ to $u^b_\beta$ no edge-gadget is used, then $\beta=\alpha$ and the path~$P'$ is identical to the path described by \autoref{lem:skel-distances}(\ref{prop:dist-on-one-line}) for $s=\alpha$.
	
\looseness=-1 Thus~$P'$ traverses exactly one edge-gadget (\autoref{lem:traversing-edge-gadgets}), say~$g^E_{\alpha,\beta}$. Hence the subpath in~$P'$ from~$u^t_\alpha$ via $g^E_{\alpha,\beta}$ to~$u^b_\beta$ has length at least $(\beta-\alpha+\addFactor+1)y$. Contradictorily, the path $u^t_\alpha-a^t_\alpha-a^b_\alpha-u^b_\alpha-u^b_{\alpha+1}-\ldots-u^b_\beta$ is of length at most $(2+\beta-\alpha)y+2n$ and thus is by at least~$\halfaddFactor y$ shorter.
\oldqed%
\end{proof}}
\autoref{obs:traversing-edge-gadget}, \autoref{lem:traversing-edge-gadgets}, and \autoref{lem:skel-distances} together with the following proposition is all what we need to prove the correctness of our reduction.
\begin{proposition}\label{prop:all-but-vertex-is-separated}
  The four vertices $\{u^t_\ell,u^t_r,u^b_\ell,u^b_r\}$ separate all vertices in~$G'$ except the pairs $\{l^i_j,r^i_j\}$ for all $1\le i,j\le n$.
 \end{proposition}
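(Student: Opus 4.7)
The plan is, for each vertex $v \in V'$, to compute the 4-tuple
\[
D(v) := (\dist(u^t_\ell, v),\ \dist(u^t_r, v),\ \dist(u^b_\ell, v),\ \dist(u^b_r, v))
\]
and to verify that $D$ is injective on $V'$ except that $D(l^i_j) = D(r^i_j)$ for every $1 \le i, j \le n$. The main engine is \autoref{lem:skel-distances} together with \autoref{obs:traversing-edge-gadget} and \autoref{lem:traversing-edge-gadgets}: a shortest path from a corner to a vertex outside an edge-gadget stays in the skeletal structure and the vertex-gadgets (entering and leaving an edge-gadget other than by traversal is strictly wasteful), while a shortest path to a vertex inside an edge-gadget enters through exactly one of the four entrance vertices and does not traverse the gadget.

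I would first verify the non-separation. From $u^t_\ell$ the shortest path to $l^i_j$ (resp.\ $r^i_j$) runs along the attached $P_3$, the top-line to $u^t_i$, the $y$-path to $a^t_i$, and then the left (resp.\ right) arc of the $(2n+2)$-cycle, both arcs having length $j$ from $a^t_i$. Hence $\dist(u^t_\ell, l^i_j) = \dist(u^t_\ell, r^i_j) = 2 + iy + j$, and the identical left/right symmetry through the anchors gives equality for the other three corners, so each $\{l^i_j, r^i_j\}$ is indeed unseparated.

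Next I would carry out a case analysis on the locations of the two vertices of the pair. For pairs consisting only of skeletal or vertex-gadget vertices, the analogous explicit formulas for anchors, top-/bottom-line vertices, and left/right vertices show that the four coordinates of $D(v)$ determine the index $i$ of the containing vertex-gadget (via the coefficient of $y$, using $y>2n+2$), the label top versus bottom (via which pair of coordinates is smaller), and the intra-gadget or intra-line position (from the residue); a routine check eliminates every such pair except the symmetric ones $\{l^i_j, r^i_j\}$. For a pair in which exactly one vertex $v$ lies strictly inside an edge-gadget $g^E_{i,j}$, the shortest path from any corner enters $g^E_{i,j}$ at one specific entrance $e \in \{r^i_j, r^j_i, u^t_{i,j,1}, u^t_{i,j,2}\}$, so $\dist(\cdot, v)$ decomposes as the already-computed corner-to-$e$ distance plus a short intra-gadget distance within the parts $\TL, \TR, \BL, \BR, \M$; this separates such pairs from all non-edge-gadget vertices because the entrance-to-$v$ offset is at most $y$ while the corner-to-entrance distance already distinguishes the gadgets.

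The remaining and main obstacle is the case where both vertices are strictly inside edge-gadgets, either the same gadget or different ones. Here the entrance vertex realizing the minimum depends on which of the five parts $\TL, \TR, \BL, \BR, \M$ contains each vertex and on the corner, so ruling out coincidences of $D$ across corners and across gadgets requires a thorough case distinction. I would defer this step to \autoref{lem:twoEdgeGadgetSeparated} (as the authors do) and invoke it directly. Combining all cases yields the proposition.
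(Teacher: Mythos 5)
Your overall plan matches the paper's proof: a case analysis on which of the three vertex classes (skeletal structure, vertex-gadgets, edge-gadgets) the two vertices lie in, with \autoref{lem:skel-distances} supplying the distances, and with the hardest case --- both vertices in edge-gadgets --- deferred to \autoref{lem:twoEdgeGadgetSeparated}, exactly as the authors do. The non-separation check for $\{l^i_j,r^i_j\}$ and the treatment of purely skeletal and vertex-gadget pairs are sound.

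The gap is in the case of one edge-gadget vertex against one non-edge-gadget vertex. Your one-line justification --- that ``the entrance-to-$v$ offset is at most $y$ while the corner-to-entrance distance already distinguishes the gadgets'' --- does not hold up. First, for a vertex in the $\M$-part the distance to the nearest entrance vertex can be as large as about $\frac{j-i}{2}y$, far more than $y$. Second, even where the offset is small, single coordinates of your tuple $D$ do collide with non-edge-gadget vertices: a $\TL$-vertex $v$ of $g^E_{i,j}$ at distance $x$ from $u^t_{i,j,1}$ has $\dist(u^t_\ell,v)=2+(i-1)y+j+x$, which equals $\dist(u^t_\ell,l^i_m)=2+iy+m$ whenever $x=y-(j-m)$, so the corner-to-entrance distance does \emph{not} by itself distinguish the gadget from vertex-gadget vertices, and you would be forced to chase all four coordinates through each of the five parts. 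The paper closes this case with one clean observation for which you already have all the ingredients: by \autoref{lem:skel-distances} every skeletal and every vertex-gadget vertex lies on some shortest path from $u^t_\ell$ to $u^b_r$, whereas by \autoref{obs:traversing-edge-gadget} and \autoref{lem:traversing-edge-gadgets} no shortest path between these two corners contains an edge-gadget vertex; hence for such a pair $\{w,e\}$ we have $\dist(u^t_\ell,w)+\dist(w,u^b_r)=\dist(u^t_\ell,u^b_r)<\dist(u^t_\ell,e)+\dist(e,u^b_r)$, so $u^t_\ell$ or $u^b_r$ separates the pair. Substituting this argument for your step makes the rest of the plan go through.
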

The next subsection is dedicated to prove \autoref{prop:all-but-vertex-is-separated}. Based on it, in \autoref{subsec:correctness} we give a formal correctness proof of the reduction.

\subsection{Proof of \autoref{prop:all-but-vertex-is-separated}}

The major work in proving \autoref{prop:all-but-vertex-is-separated} is to show that the vertices in the edge-gadgets are separated.
To this end, we first show which entrance vertices are used by shortest paths starting in $\{u^t_\ell,u^t_r,u^b_\ell,u^b_r\}$ and ending in an edge-gadget vertex.

\begin{obs}\label{obs:endpoints-entering-edge-gadgets}
 Let $e$ be a vertex in an edge-gadget~$g^E_{i,j}$. If~$e$ is contained in the $\TL$-, $\BL$- ($\TR$-, $\BR$-,) or $\M$-part, then all shortest paths from $u^t_\ell$ or $u^b_\ell$ ($u^t_r$ or $u^b_r$) to~$e$ enter $g^E_{i,j}$ either via~$u^t_{i,j,1}$ ($u^t_{i,j,2}$) or~$r^i_j$ ($r^j_i$).
\end{obs}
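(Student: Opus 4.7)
The plan is to exploit the left-right mirror symmetry of $G'$ in order to restrict attention to shortest paths starting in $u^t_\ell$ or $u^b_\ell$ and ending in a vertex $e$ of the $\TL$-, $\BL$-, or $\M$-part of $g^E_{i,j}$; once this case is handled, the parenthetical case for $u^t_r, u^b_r$ (and $e$ in the $\TR$-, $\BR$-, or $\M$-part) follows by mirroring. The goal is to rule out every shortest path that enters $g^E_{i,j}$ through one of the ``right'' entrance vertices $u^t_{i,j,2}$ or $r^j_i$.

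First I would tabulate the outside distances from $u^t_\ell$ to each of the four entrance vertices, each realised by following the top-line from $u^t_\ell$ and, where needed, descending a single $y$-path into a vertex-gadget:
\begin{align*}
\dist(u^t_\ell, u^t_{i,j,1}) &= (i-1)y + j + 2, & \dist(u^t_\ell, r^i_j) &= iy + j + 2, \\
\dist(u^t_\ell, u^t_{i,j,2}) &= (j-1)y - i + 2, & \dist(u^t_\ell, r^j_i) &= jy + i + 2.
\end{align*}
That these values are indeed optimal follows from \autoref{lem:skel-distances}(\ref{prop:dist-on-one-line}) together with \autoref{obs:traversing-edge-gadget} and \autoref{lem:traversing-edge-gadgets}, which forbid cheaper routes through additional edge-gadgets. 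The $u^b_\ell$-versions are obtained analogously, using \autoref{lem:skel-distances}(\ref{prop:dist-on-different-lines}) whenever a shortest path has to cross between the top- and bottom-line.

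Next I would parameterise $e$ inside $g^E_{i,j}$ and compute the in-gadget distance from each of the four entrance vertices. For $e$ in the $\TL$-part at distance $d \le y$ from $u^t_{i,j,1}$, the same-side entrance $u^t_{i,j,1}$ reaches $e$ in $d$ steps, while $u^t_{i,j,2}$ reaches $e$ only after traversing $\TR$, $\M$, and part of $\TL$, costing $(j-i+\addFactor)y - d$ in-gadget steps. The situation is symmetric for $e \in \BL$, and for $e \in \M$ at distance $t$ from $w^{i,j}_1$ both left entrances reach $e$ in $y+t$ steps whereas both right entrances need $y + (j-i-\addFactorMinusOne)y - t$ steps. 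Adding the outside and in-gadget contributions produces four candidate totals for $\dist(u^t_\ell, e)$; I would then check by a short comparison that every total involving a right entrance strictly exceeds the smaller of the two totals involving a left entrance.

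The main obstacle is the subcase $e \in \M$ with $t$ close to $(j-i-\addFactorMinusOne)y$, where the inside advantage of a right entrance is largest. There the decisive inequality reduces to showing that the outside-cost penalty of a right entrance, namely $(j-i)y - O(n)$, strictly dominates the in-gadget saving of at most $(j-i-\addFactorMinusOne)y$; the residual is $\addFactorMinusOne y - O(n) > 0$ by the standing choice $\halfaddFactor y > 2n + 2$ together with $j - i \ge 3$. This forces every shortest path from $u^t_\ell$ (and, by the analogous computation, from $u^b_\ell$) to $e \in \TL \cup \BL \cup \M$ to enter $g^E_{i,j}$ via $u^t_{i,j,1}$ or $r^i_j$, which is the claim.
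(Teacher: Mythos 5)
Your proposal is correct and follows essentially the same route as the paper: both arguments observe that a path from $u^t_\ell$ or $u^b_\ell$ to a vertex in the $\TL$-, $\BL$-, or $\M$-part that enters $g^E_{i,j}$ via $u^t_{i,j,2}$ or $r^j_i$ must first traverse the $\TR$- or $\BR$-part to $w^{i,j}_2$, and both then compare explicit skeletal-plus-in-gadget path lengths to conclude that the right-hand entrances incur an extra cost on the order of $\addFactorMinusOne y - O(n) > 0$. The paper simply performs this comparison once, at the bottleneck vertex $w^{i,j}_2$, rather than tabulating all four entrance totals for a general target $e$ as you do, but the underlying computation and the decisive margin are the same.
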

\begin{proof}
	We prove the claim for shortest paths starting in~$u^t_\ell$ or~$u^b_\ell$. 
	The other case is completely symmetric. 
	A path from $u^t_\ell$ or from $u^b_\ell$ to~$e$ that neither enters~$g^E_{i,j}$ via~$r^i_j$ nor via~$u^t_{i,j,1}$ has to traverse either the $\TR$ or $\BR$-part till~$w^{i,j}_2$. 
	With this requirement the shortest paths are $u^t_\ell-u^t_{j-1}-u^t_{i,j,2}-w^{i,j}_2$ with length $2+jy-i$ and $u^b_\ell-u^b_{j}-a^b_j-r^j_i-w^{i,j}_2$ with length $2+(j+1)y+(n-i+1)$. 
	However, each of the paths $u^t_\ell-u^t_i-u^t_{i,j,1}-w^{i,j}_1-w^{i,j}_2$ with length $2+j+(j-\addFactorMinusOne)y$ and $u^b_\ell-u^b_i-a^b_i-r^i_j-w^{i,j}_1-w^{i,j}_2$ with length $2+(n-j+1)+(j+\addFactorMinusOne)y$ are at least by $\halfaddFactor y$ shorter, respectively.\oldqed
\end{proof}
\begin{lemma}\label{lem:twoEdgeGadgetSeparated}
The four vertices $\{u^t_\ell,u^t_r,u^b_\ell,u^b_r\}$ separate each vertex-pair consisting of two edge-gadget vertices.	
\end{lemma}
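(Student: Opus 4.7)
The plan is to prove the lemma by case analysis on the locations of the two edge-gadget vertices $e_1 \in g^E_{i_1,j_1}$ and $e_2 \in g^E_{i_2,j_2}$. The guiding idea is that each edge-gadget vertex has an address consisting of its gadget indices $(i,j)$, the part ($\TL$, $\TR$, $\BL$, $\BR$, or $\M$) in which it lies, and a depth $d \in [0,y]$ within that part, and that the four-tuple of distances $(\dist(u^t_\ell, e), \dist(u^t_r, e), \dist(u^b_\ell, e), \dist(u^b_r, e))$ encodes this address uniquely.

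First, for each of the five parts I compute closed-form expressions for all four distances. By \autoref{obs:endpoints-entering-edge-gadgets}, a shortest path from $u^t_\ell$ or $u^b_\ell$ enters $g^E_{i,j}$ through a \emph{left} entrance vertex ($u^t_{i,j,1}$ or $r^i_j$), and a shortest path from $u^t_r$ or $u^b_r$ enters through a \emph{right} entrance vertex ($u^t_{i,j,2}$ or $r^j_i$); moreover \autoref{lem:skel-distances} yields the skeleton distance to each of these entrance vertices. Each of the four distances therefore has the form $c \cdot y \pm d + R$, where $c \in \mathbb{Z}_{\ge 0}$ depends only on $(i,j)$ and the part, and $R$ is a residue of absolute value at most $2n+2 < \halfaddFactor y$. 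For the $\M$-part there are two competing entering directions, so one gets a minimum of two such affine functions of $d$.

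Step 1 (\emph{different edge-gadgets}): If $(i_1,j_1) \neq (i_2,j_2)$, I argue that the $y$-coefficients cannot all agree across the four separators. The two left-sided separators encode the index $i$ (via the position of $u^t_i$ on the top-line), while the two right-sided separators encode $j$, so if all four $y$-coefficients coincided on both gadgets we would already have $i_1=i_2$ and $j_1=j_2$. A discrepancy of at least $1$ in any $y$-coefficient cannot be absorbed by the residues since $\halfaddFactor y > 2n+2$. Step 2 (\emph{same gadget, different parts}): The five parts of a fixed gadget $g^E_{i,j}$ yield five pairwise distinct tuples of $y$-coefficients: $\TL$ and $\BL$ share the $u^t_\ell$-coefficient but differ in the $u^b_\ell$-coefficient, since $\BL$ is reached from the bottom directly via $r^i_j$ whereas $\TL$ needs a detour through the top line, and symmetrically for $\TR$ versus $\BR$; the $\M$-part has strictly larger $y$-coefficients on every coordinate. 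Step 3 (\emph{same part, same gadget}): The depth $d$ is the only free parameter, and for every part I select a separator whose shortest path enters the part at a fixed endpoint, making the distance strictly monotone in $d$; for the $\M$-part one splits at the break point of the minimum and checks monotonicity on each side.

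The main obstacle is Step 2: the $5 \times 4$ table of $y$-coefficients for the parts of a single gadget has several equal entries, so one must argue which coordinate actually separates each pair, and the residue terms must be controlled precisely rather than merely up to sign. The $\M$-part is the most delicate, because the switch point between its two competing shortest paths interacts with the residue bound, and the comparison of $\M$-vertices against near-boundary vertices of $\TL,\TR,\BL,\BR$ requires the tightest use of $\halfaddFactor y > 2n+2$. Apart from this, every sub-case reduces to a short arithmetic check on the tables produced in the setup step.
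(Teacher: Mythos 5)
Your plan---classify each edge-gadget vertex by (gadget, part, depth) and show that the quadruple of distances to $\{u^t_\ell,u^t_r,u^b_\ell,u^b_r\}$ determines this address---is the same genre of argument as the paper's case analysis, but the two load-bearing claims in your Steps~1 and~2 are false, and they fail precisely in the hard cases. The $\M$-part of $g^E_{i,j}$ has length $(j-i-\frac{1}{2})y\ge \frac{5}{2}y$, so its vertices do not fit your format ``$c\cdot y\pm d+R$ with $d\in[0,y]$'': the depth inside $\M$ itself contributes several multiples of~$y$. Consequently Step~1 breaks: two vertices in \emph{different} gadgets can have all four distances agreeing in their $y$-parts. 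Take $u\in\M$ of $g^E_{i,j}$ and $v\in\M$ of $g^E_{i',j'}$ with $i<i'<j'<j$ and $i+j=i'+j'$, and choose the depths so that $\dist(w^{i',j'}_1,v)=\dist(w^{i,j}_1,u)-(i'-i)y+j-j'$. Then $\dist(u^t_1,u)=\dist(u^t_1,v)$ and $\dist(u^t_n,u)=\dist(u^t_n,v)$ hold \emph{exactly}, and the remaining two distances differ only by $2(j-j')=O(n)$, i.e.\ only in the residues. There is no ``discrepancy of at least $1$ in a $y$-coefficient'' to exploit, and separation must come from an exact residue computation; this is exactly what the paper's Claims~4--5 and its Case~3 do. Similar collisions occur between an $\M$-vertex of one gadget and a $\TL/\BL/\TR/\BR$-vertex of another.

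The setup table is also not as clean as you assume. \autoref{obs:endpoints-entering-edge-gadgets} pins down the entrance vertices only for the two separators on the \emph{matching} side (left separators for $\TL,\BL,\M$; right separators for $\TR,\BR,\M$); for the crossed combinations, e.g.\ $\dist(u^t_r,e)$ with $e\in\TL$, the shortest path enters via $u^t_{i,j,1}$ for small depth but via $u^t_{i,j,2}$ and the entire $\M$-part once the depth exceeds roughly $\frac{3}{4}y+\frac{i+j}{2}$, so every part---not just $\M$---yields a minimum of an increasing and a decreasing affine function with an $(i,j)$-dependent breakpoint. Within Step~2, the claim that ``the $\M$-part has strictly larger $y$-coefficients on every coordinate'' also fails: an $\M$-vertex adjacent to $w^{i,j}_1$ and a $\TL$-vertex adjacent to $w^{i,j}_1$ have $\dist(u^t_1,\cdot)$ differing by~$2$. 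Your Step~3 (monotonicity in the depth within a fixed part of a fixed gadget) is sound and corresponds to the paper's Claim~1. To close the gap you would need, for each colliding configuration above, a hand-picked separator together with an exact computation of both residues---which is essentially the content of the paper's Claims~2--5 and the ensuing three-case analysis, not a routine corollary of the $y$-coefficient table.
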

\appendixproof{\autoref{lem:twoEdgeGadgetSeparated}}
{\begin{proof}
Let~$u$ be a vertex in the edge-gadget~$g^E_{i,j}$ with $i<j$, and let~$v$ be a vertex in the edge-gadget~$g^E_{i',j'}$ with $i'<j'$. By our vertex numbering it follows that $i<j'$ and $i'<j$. We shall show that~$u$ and~$v$ are separated by~$u^t_1,u^t_n,u^b_1$, or $u^b_r$, implying that they are also separated by the corresponding degree-one vertices $\{u^t_\ell,u^t_r,u^b_\ell,u^b_r\}$.
	
Recall that \autoref{lem:skel-distances} provides the distance between any vertex on the top- or bottom-line  to any other vertex either contained in a vertex-gadget or on top- or bottom-line. The following distances will be frequently used:
\begin{align}
		\dist(r^i_j, u^t_1) & = iy+j							\label{eq:dist-rij-to-ut1} \\
		\dist(r^i_j, u^t_n) & = (n-i+1) y + j 					\label{eq:dist-rij-to-utn}\\
		\dist(r^i_j, u^b_1) & = iy + n - j + 1 					\label{eq:dist-rij-to-ub1}\\
		\dist(r^i_j, u^b_n) & = (n-i+1)y + n - j + 1 			\label{eq:dist-rij-to-ubn}\\
		\dist(u^t_{i,j,1}, u^t_1) & = (i-1)y + j 				\label{eq:dist-utij1-to-ut1}\\
		\dist(u^t_{i,j,1}, u^t_n) & = (n-i)y - j 				\label{eq:dist-utij1-to-utn}\\
		\dist(u^t_{i,j,1}, u^b_n) & = (n-i+2)y - j + n + 1 		\label{eq:dist-utij1-to-ubn}\\
		\dist(u^t_{i,j,2}, u^t_1) & = (j-1)y - i 				\label{eq:dist-utij2-to-ut1}\\
		\dist(u^t_{i,j,2}, u^t_n) & = (n-j)y + i 				\label{eq:dist-utij2-to-utn}\\
		\dist(u^t_{i,j,2}, u^b_n) & = (n-j+2)y + i + n + 1		\label{eq:dist-utij2-to-ubn}
	\end{align}
	We prove \autoref{lem:twoEdgeGadgetSeparated} by several case distinctions.
Therein, the following five claims are helpful to simplify the argumentation (the proofs are separate subsections in the appendix).

\medskip \noindent {\bf Claim 1:} $\{u,v\}$ are separated if $i=i'$ and~$j=j'$.
\appendixproofT{Claim~1 in \autoref{lem:twoEdgeGadgetSeparated}}
{\begin{proof}[Claim 1:]
		It follows that $u$ and~$v$ are contained in~$g^E_{i,j}$ and we make a case distinction on in which part they lie.

		\emph{{\bf Case 1:} At least one of $\{u,v\}$ is contained in the $\M$-part.} 
		Suppose $u$ is contained in the $\M$-part, then by \autoref{obs:endpoints-entering-edge-gadgets} a shortest path from~$u^t_\ell$ ($u^b_\ell$) to $u$ contains the subpath $u^t_{i,j,1}-w^{i,j}_1$ ($r^i_j-w^{i,j}_1$) and from there traverses the M-part till~$u$. 
		Clearly, if $v$ is also contained in the $\M$-part, then the same holds for~$v$ and thus $\dist(u^t_\ell,v)\neq \dist(u^t_\ell,u)$. 
		If $v$ is either contained in the $\TL$- or $\BL$-part, then $v$ is contained either on a shortest path from $u^t_\ell$ to~$u$ or on a shortest path from $u^b_\ell$ to~$u$. 
		The remaining subcase where $v$ is contained in the $\TR$- or $\BR$-part can be proven analogously by interchanging the role of $\{u^t_\ell,u^b_\ell\}$ by $\{u^t_r,u^b_r\}$.

		\emph{{\bf Case 2:} $u,v$ are both either in the $\TL$-, $\BL$-, $\TR$-, or $\BR$-part.}
		Observe that if $u$ and~$v$ are both on the $\TL$- ($\BL$-, $\TR$-, $\BR$-) part, then they are separated by $u^t_\ell$ ($u^b_\ell,u^t_r,u^b_r$) since there is a shortest path from $u^t_\ell$ or~$u^b_\ell$ ($u^t_r,u^b_r$) to~$w^{i,j}_1$ ($w^{i,j}_2$) that contains~$v$ and~$u$.

		\emph{{\bf Case 3:} $u$ is contained in the $\TL$- or $\BL$-part and $v$ is contained in the $\TR$- or $\BR$-part.} 
		Recall that by our vertex numbering it holds that~$j\ge i+3$. In this case, since $u^t_i$ has distance less than $2y$ to all vertices on the $\TL$- and $\BL$-part, it follows that $\dist(u^t_1,u)<(i+1)y$. 
		Additionally, a shortest path from~$u^t_1$ to~$v$ has to contain either~$u^t_{i,j,2}$, $r^j_i$, or $w^{i,j}_2$. 
		Since $\dist(u^t_1,u^t_{i,j,2}) \overset{(\ref{eq:dist-utij2-to-ut1})}{=} (j-1)y-i\ge (i+2)y$, $\dist(u^t_1,w^{i,j}_2)>(i-1+j-i+\addFactorMinusOne)y\ge (i+2+\addFactorMinusOne)y$, and $\dist(u^t_1,r^j_i)>\dist(u^t_1,u^t_{i,j,2})$ it follows that~$u$ and~$v$ are separated by~$u^t_1$.

		\looseness=-1 \emph{{\bf Case 4:} $u$ is contained in the $\TL$- ($\TR$-) part and $v$ is contained in the $\BL$- ($\BR$-) part.} 
		We prove the claim in case of $u\in \TL$ and $v\in \BL$. The other case is completely symmetric.
		If the shortest path from~$v$ to~$u^t_1$ goes via~$u^t_{i,j,1}$, then~$u$ lies on this shortest path and, thus, $u$ and~$v$ are separated.
		If the shortest path from~$v$ to~$u^t_1$ goes via~$r^i_j$, then observe that~$\dist(u^t_1, v) > \dist(u^t_1, r^i_j) \overset{(\ref{eq:dist-rij-to-ut1})}{=} iy+j$.
		Furthermore, from \autoref{eq:dist-utij1-to-ut1} it follows that~$\dist(u^t_1, u) \le (i-1)y + j + y = iy+j$. 		\oldqed
	\end{proof}}

	\medskip \noindent {\bf Claim~2:} $\{u,v\}$ are separated if~$u \in \TL \cup \BL$ and $i<i'$. Symmetrically, $\{u,v\}$ are separated if $u\in \TR\cup \BR$ and $j>j'$.
\appendixproofT{Claim~2 in \autoref{lem:twoEdgeGadgetSeparated}}	
	{\begin{proof}[Claim~2:]
			We prove that if $u\in \TL\cup\BL$ and $i<i'$, then the vertex~$u^t_1$ or~$u^b_1$ separate $\{u,v\}$:
			It follows that $\dist(u^t_1,u,u^b_1)$
			\begin{align*}
				& \le \dist(u^t_1,u^t_{i,j,1},u)+\dist(u^b_1,r^i_j,u)\\
				& \le \dist(u^t_1,u^t_{i,j,1})+\dist(u^b_1,r^i_j)  + 2y \\
				& \overset{(\ref{eq:dist-rij-to-ub1},\ref{eq:dist-utij1-to-ut1})}{=} (iy + n - j + 1) + ((i-1)y + j) + 2y \\ 
				& = (2i+1)y + n + 1.
			\end{align*}
			Furthermore, from \autoref{obs:endpoints-entering-edge-gadgets} it follows that~$\dist(u^t_1,v,u^b_1)$
			\begin{align*}
				& > \dist(u^t_1,u^t_{i',j',1})+\dist(u^b_1,r^{i'}_{j'})\\
				& \overset{(\ref{eq:dist-rij-to-ub1},\ref{eq:dist-utij1-to-ut1})}{=} ((i'-1)y + j')+(i'y + n - j' + 1)\\ 
				& = (2i'-1)y + n + 1 \ge (2i+1)y + n + 1.
			\end{align*}
			Hence, $u^b_1$ and~$u^t_1$ separate $\{u,v\}$.
			The symmetric case can be proven analogously be interchanging the role of $\{u^t_1,u^b_1\}$ by~$\{u^t_n,u^b_n\}$.
		\oldqed
	\end{proof}}

\noindent {\bf Claim~3:} $\{u,v\}$ are separated if $i+j\neq i'+j'$
and either
\begin{enumerate}[i)]
  \item $\dist(u^t_{i,j,1},u,u^t_{i,j,2}) = (j-i+\addFactor)y$ and $\dist(u^t_{i',j',1},v,u^t_{i',j',2}) = (j'-i'+\addFactor)y$, or
\item $\dist(r^i_j,u,r^j_i) = (j-i+\addFactor)y$ and $\dist(r^{i'}_{j'},v,r^{j'}_{i'}) = (j'-i'+\addFactor)y$.
\end{enumerate}
\appendixproofT{Claim~3 in \autoref{lem:twoEdgeGadgetSeparated}}	
{\begin{proof}[Claim~3:]
	By the requirements it is ensured that a path from
\begin{enumerate}[i)]
  \item $u^t_1$ to~$u^t_n$ via $u$ ($v$) enters $g^E_{i,j}$ ($g^E_{i',j'}$) via $u^t_{i,j,1}$ ($u^t_{i',j',1}$), traverses it, and leaves it via~$u^t_{i,j,2}$ ($u^t_{i',j',2}$), or
\item $u^b_1$ to $u^b_n$ via $u$ ($v$) enters $g^E_{i,j}$ ($g^E_{i',j'}$) via $r^i_j$ ($r^{i'}_{j'}$), traverses it, and leaves it via~$r^j_i$ ($r^{j'}_{i'}$).
\end{enumerate}
In case of i) it holds that $\dist(u^t_1,u,u^t_n)$
			\begin{align*}
				& = \dist(u^t_1,u^t_{i,j,1},u) + \dist(u,u^t_{i,j,2},u^t_n) \\
									& \overset{(\ref{eq:dist-utij1-to-ut1},\ref{eq:dist-utij2-to-utn})}{=} ((i-1)y + j ) + ((n-j)y + i) + (j-i+\addFactor)y \\
									& = i+j+(n+\addFactor-1)y.
			\end{align*}
			Symmetrically, $\dist(u^t_1,v,u^t_n)	= i'+j'+(n+\addFactor-1)y$. Since~$i + j \neq i'+j'$ it follows that~$u$ and~$v$ are separated by~$u^t_1$ and~$u^t_n$.

Now, assume that ii) holds, then $\dist(u^b_1,u,u^b_n)$
\begin{align*}
 = {} & \dist(u^b_1,r^i_j,u)+\dist(u,r^j_i,u^b_n)\\
	\overset{(\ref{eq:dist-rij-to-ub1},\ref{eq:dist-rij-to-ubn})}{=} {} & iy+n-j+1+n-i+1\\&+(n-j+1)y+(j-i+\addFactor)y\\
	= {} & (n+1+\addFactor)y+2n-j-i+2
\end{align*}
and, symmetrically, $\dist(u^b_1,v,u^b_n)=(n+1+\addFactor)y+2n-j'-i'+2$. Since $i+j\neq i'+j'$ it follows that~$u$ and~$v$ are separated.
		\oldqed
	\end{proof}}

	\noindent {\bf Claim~4:} $\{u,v\}$ are separated if $u \in \M$ and $v \in \TL'\cup \TR'$.
\appendixproofT{Claim~4 in \autoref{lem:twoEdgeGadgetSeparated}}	
{\begin{proof}[Claim~4:] We prove the claim for $v\in \TL'$. The case with $v\in \TR'$ follows from the symmetry of the construction.
			From \autoref{obs:endpoints-entering-edge-gadgets} follows~$\dist(u^t_1,v) = \dist(u^t_1, u^t_{i',j',1},v) \overset{(\ref{eq:dist-utij1-to-ut1})}{=} (i'-1)y+j'+x$ with~$x = \dist(u^t_{i',j',1},v) < y$.
			Furthermore, $\dist(u^t_1,u) = \dist(u^t_1,u^t_{i,j,1}, w^{i,j}_1, u) = (i-1)y + j + y + \dist(w^{i,j}_1, u) = iy + j + \dist(w^{i,j}_1, u)$.
			Assuming that~$\dist(u^t_1,v) = \dist(u^t_1,u)$ (otherwise~$u$ and~$v$ are separated) we have~$(i'-1)y+j'+x = iy + j + \dist(w^{i,j}_1, u)$ and, hence, $\dist(w^{i,j}_1, u) = (i'-i-1)y + j' - j + x$.
			Thus,~$i' \ge i$.
			From this together with \autoref{obs:endpoints-entering-edge-gadgets} it follows that $\dist(u^b_1,u)$
			\begin{align*}
				 	& = \dist(u^b_1,r^i_j, w^{i,j}_1, u)\\ 
				& \overset{(\ref{eq:dist-rij-to-ub1})}{=} iy+n-j+1 + y + (i'-i-1)y + j' -j +x \\
				& = i'y + n +j' -2j + 1 + x.
			\end{align*}
			Furthermore, it follows that $\dist(u^b_1,v) = \dist(u^b_1,r^{i'}_{j'},v) \overset{(\ref{eq:dist-rij-to-ub1})}{=} i'y+n-j'+1 + \dist(r^{i'}_{j'},v)$.
			By \autoref{obs:endpoints-entering-edge-gadgets} $\dist(r^{i'}_{j'},v)$
			\begin{align*} &= \min\{ \dist(r^{i'}_{j'},u^t_{i',j',1},v) , \dist(r^{i'}_{j'},w^{i',j'}_1,v) \} \\
			&= \min\{y + 2j' + x, 2y - x\}.\end{align*}
			Hence, $\dist(u^b_1,v) \overset{(\ref{eq:dist-rij-to-ub1})}{=} i'y + n -j' + 1 + \min\{y + 2j' + x, 2y - x\}$.
			Assuming~$\dist(u^b_1,u) = \dist(u^b_1,v)$ it follows that
			\begin{align*}
				i'y + n +j' -2j + 1 + & x \\ = i'y + n -j' + & 1 + \min\{y + 2j' + x, 2y - x\} 
\end{align*} and thus $2j'-2j + x  = \min\{y + 2j' + x, 2y - x\}$.
			
 			Since the case $2j'-2j + x = y + x + 2j'$ yields a contradiction ($y=-2j$), it follows that $2j'-2j=2y-2x$. However, $x<y$ implies $j'>j$ and thus $i\le i'<j<j'$. Since $x=y-j'+j$, 
 $\dist(u^t_{i',j',1},v,u^t_{i',j',2})=\dist(u^t_{i',j',1},v)+\dist(v,w^{i',j'}_1,w^{i',j'}_2,u^t_{i',j',2})$, implying that all preconditions of Claim~3 are fulfilled and thus~$u$ and~$v$ are separated.\oldqed
	\end{proof}}

	\noindent {\bf Claim~5:} $\{u,v\}$ are separated if $u \in \M$ and $v \in \BL'\cup \BR'$.
	\appendixproofT{Claim~5 in \autoref{lem:twoEdgeGadgetSeparated}}
	{\begin{proof}[Claim~5:]
			We prove the claim for $v\in \BL'$. The case with $v\in \BR'$ follows from the symmetry of the construction.
			From \autoref{obs:endpoints-entering-edge-gadgets} it follows that~$\dist(u^b_1,v) = \dist(u^b_1, r^{i'}_{j'},v) \overset{(\ref{eq:dist-rij-to-ub1})}{=} i'y+n-j'+1+x$ with~$x = \dist(r^{i'}_{j'},v) < y$.
			Furthermore, ~$\dist(u^b_1,u) = \dist(u^b_1,r^i_j, w^{i,j}_1, u) \overset{(\ref{eq:dist-rij-to-ub1})}{=} iy + n - j + 1 + y + \dist(w^{i,j}_1, u) = (i+1)y + n - j + 1 + \dist(w^{i,j}_1, u)$.
			Assuming that~$\dist(u^b_1,v) = \dist(u^b_1,u)$ (otherwise~$u$ and~$v$ are separated) we have~$i'y+n-j'+1+x = (i+1)y+n-j+1 + \dist(w^{i,j}_1, u)$ and, hence, $\dist(w^{i,j}_1, u) = (i'-i-1)y + j - j' + x$.
			Thus,~$i' \ge i$.
			This together with \autoref{obs:endpoints-entering-edge-gadgets} implies $\dist(u^t_1,u)$
			\begin{align*}
				 	& = \dist(u^t_1,u^t_{i,j,1}, w^{i,j}_1)+\dist(w^{i,j}_1,u) \\ 
								& \overset{(\ref{eq:dist-utij1-to-ut1})}{=} (i-1)y + j + y + (i'-i-1)y + j -j' +x \\
								& = (i'-1)y -j' + 2j + x.
			\end{align*}
			In addition, $\dist(u^t_1,v) = \dist(u^t_1,u^t_{i'},v) = (i'-1)y + \dist(u^t_{i'},v)$
			and
	\begin{align*}\dist(u^t_{i'},v) &= \min\{ \dist(u^t_{i'},w^{i',j'}_1,v) , \dist(u^t_{i'}, r^{i'}_{j'},v) \}\\& =  y + j' + \min\{y - x, x\}.\end{align*}
			Hence, $\dist(u^t_1,v) = i'y  + j' + \min\{x, y - x\}$.
			Assuming $\dist(u^t_1,u) = \dist(u^t_1,v)$ (otherwise~$u$ and~$v$ are separated) implies
			\begin{align*}
				(i'-1)y -j' + 2j + x & = i'y + j' + \min\{x, y - x\} \\
				2j-2j' + x & = y + \min\{x, y - x\}.
			\end{align*}
			\looseness=-1 This gives that either~$2j-2j' + x = y + x$ or~$2j-2j' + x = 2y - x$.
			In the first case this gives~$y = 2j - 2j'$, contradicting~$\halfaddFactor y>2n$.
			The second case gives~$x = y + j' - j$.
			Since~$x < y$ it follows that~$j > j'$ and, thus, $i \le i' < j' < j$.
%
			However, since~$\dist(r^{i'}_{j'},v) = x = y + j' - j$, it follows that
			\begin{align*}
				\dist(u^t_n,v) 	& = \dist(u^b_1,u^t_{i',j',2},w^{i',j'}_2,w^{i',j'}_1,v) \\
								& \overset{(\ref{eq:dist-utij2-to-utn})}{=} (n-j')y+ i' + (j'-i'+\addFactorMinusOne)y + j-j' \\
								& = (n-i'+\addFactorMinusOne)y+i'-j' + j.
			\end{align*}
			Furthermore, 
			\begin{align*}
				\dist(u^t_n,u) 	= {} & \dist(u^b_n,u^t_{i,j,2},w^{i,j}_2,w^{i,j}_1) - \dist(w^{i,j}_1,u) \\
								\overset{(\ref{eq:dist-utij2-to-utn})}{=} {} & (n-j)y+i + (j-i+\addFactorMinusOne)y\\& - ((i'-i-1)y + j - j' + x) \\
								= {} & (n - i + \addFactorMinusOne)y + i\\
								& - ( (i' - i-1)y + j - j' + x )\\
								= {} & (n - i'+ \addFactor)y + i - j + j' - x \\
								= {} & (n - i'+ \addFactorMinusOne)y   + i. 
			\end{align*}
			Hence, 
			\begin{align*}
				\dist(u^b_n,u) - \dist(u^b_n,v)
								& = i  - (i' - j' + j) \\
								& = i - i' + j' - j.
			\end{align*}
			Recall that~$i \le i'$ and~$j' < j$.
			Thus, $\dist(u^b_n,u) - \dist(u^b_n,v) < 0$ and, hence, $u$ and~$v$ are separated by~$u^b_n$.
		\oldqed
	\end{proof}}
 
We now prove \autoref{lem:twoEdgeGadgetSeparated} by a case distinction on how the indices $i,i',j,$ and~$j'$ are related to each other.
Without loss of generality, we assume that $i\le i'$. 
Moreover, by Claim~1 either $i \neq i'$ or $j\neq j'$. 
We first prove the case with $j=j'$ and $i \neq i'$ (Case~1). 
The case where $i=i'$ and $j \neq j'$ is omitted because it can be proven completely analogously. 
Hence, the remaining cases are~$i<i' < j < j'$ (Case~2) and $i < i' < j' < j$ (Case~3). Note that in all these cases, by Claim~2 we may assume that $u\notin \TL\cup \BL$.

\smallskip
\noindent \textbf{Case 1 $i < i' < j = j'$:}\\
If $u\in M$, then Claim~3, 4, and 5 prove that $\{u,v\}$ are separated. It remains to consider $u\in \TR\cup \BR$.

			\noindent\emph{Subcase 1:} $u \in \TR$.\\
			If $\dist(u^t_n, u) = \dist(u^t_n, v)$, then $\dist(u^t_n, u) = \dist(u^t_n, u^t_{i,j,2}, u) \overset{(\ref{eq:dist-utij2-to-utn})}{<} (n-j)y + i + y$.
			Since~$i<i'$ and it follows from Equations~\ref{eq:dist-rij-to-utn}, \ref{eq:dist-utij1-to-utn}, and \ref{eq:dist-utij2-to-utn} that~$v \in \TR'$.
			Thus,~$\dist(u,u^t_{i,j,2}) = i' - i + \dist(v,u^t_{i',j',2})$.
			Let~$x = \dist(u,u^t_{i,j,2})$, then~$\dist(v,u^t_{i',j',2}) = x + i -i'$.

			\emph{Subcase 1.1:} $x < \frac{2}{3}y$.\\
			Then~$\dist(u^t_1,u) = \dist(u^t_1, u^t_{i,j,2}) + x$ and~$\dist(u^t_1,v) = \dist(u^t_1, u^t_{i',j',2}) + x + i - i'$.
			Since~$i < i'$ it follows from \autoref{eq:dist-utij2-to-ut1} that $\dist(u^t_1, u^t_{i,j,2}) > \dist(u^t_1, u^t_{i',j',2})$.
			Hence~$\dist(u^t_1,u) > \dist(u^t_1,v)$.

			\emph{Subcase 1.2:} $x \ge \frac{2}{3}y$.\\
			Then~$\dist(u,u^b_1) = \dist(u,r^i_j,u^b_1) \overset{(\ref{eq:dist-rij-to-ub1})}{=} iy+n-j+1 + (j-i+\addFactor)y - x = (j+\addFactor)y + n - j + 1 - x$.
			Furthermore, $\dist(v,u^b_1) = \dist(v,r^{i'}_{j'},u^b_1) \overset{(\ref{eq:dist-rij-to-ub1})}{=} i'y+n-j'+1 + (j'-i'+\addFactor)y - x - i + i' = (j+\addFactor)y + n - j + 1 - x - i + i'$.
			Since~$i\neq i'$ it follows that~$u$ and~$v$ are separated.

			\noindent\emph{Subcase 2:} $u \in \BR$.\\
			This subcase is analogous to the previous one:
			Assume that~$\dist(u^b_n, u) = \dist(u^b_n, v)$.
			Then~$\dist(u^b_n, u) = \dist(u^b_n, r^j_i , u) \overset{(\ref{eq:dist-rij-to-ubn})}{<} (n-j+1)y + n - i + 1 + y$.
			Since~$i<i'$ it follows from Equations~\ref{eq:dist-rij-to-ubn}, \ref{eq:dist-utij1-to-ubn}, and \ref{eq:dist-utij2-to-ubn} that~$v \in \BR'$.
			Thus,~$\dist(u,r^j_i) = i - i' + \dist(v,r^{j'}_{i'})$.
			Denoting with~$x = \dist(u,r^j_i)$ it follows that~$\dist(v,r^{j'}_{i'}) = x - i + i'$.

			\emph{Subcase 2.1:} $x < \frac{2}{3}y$.\\
			Then~$\dist(u^t_n,u) = \dist(u^t_n, r^j_i) + x$ and~$\dist(u^t_n,v) = \dist(u^t_n, r^{j'}_{i'}) + x - i + i'$.
			Since~$i < i'$ it follows from \autoref{eq:dist-rij-to-utn} that $\dist(u^t_n, r^j_i) < \dist(u^t_n, r^{j'}_{i'})$.
			Hence~$\dist(u^t_n,u) < \dist(u^t_n,v)$.

			\emph{Subcase 2.2:} $x \ge \frac{2}{3}y$.\\
			Then~$\dist(u,u^t_1) = \dist(u,u^t_{i,j,1},u^t_1) \overset{(\ref{eq:dist-utij1-to-ut1})}{=} (i-1)y+j + (j-i+\addFactor)y - x = (j+\frac{1}{2})y+j-x$.
			Furthermore, $\dist(v,u^t_1) = \dist(v,u^t_{i',j',1},u^t_1) \overset{(\ref{eq:dist-utij1-to-ut1})}{=} (i'-1)y+j' + (j'-i'+\addFactor)y - x + i - i' = (j+\frac{1}{2})y+j-x + i - i'$.
			Since~$i\neq i'$ it follows that~$u$ and~$v$ are separated.

		\smallskip
		\noindent \textbf{Case 2 $i < i' < j < j'$:}\\
		If $u\in M$, then Claim~3, 4, and 5 prove that $\{u,v\}$ are separated. It remains to consider $u\in \TR\cup \BR$.
 
\noindent \emph{Subcase 1:} $u \in \TR$.\\
			It follows from the Claims~2, 3, 4, and 5 that $v \in \TL'\cup \BL'$.
			If~$\dist(u^t_n, u) = \dist(u^t_n, v)$, then~$\dist(u^t_n, u) \overset{(\ref{eq:dist-utij2-to-utn})}{=} (n-j)y + i + x$ where~$x = \dist(u,u^t_{i,j,2}) < y$.
			It follows that~$v \in \TL'$ and~$\dist(u^t_n,v) = \dist(u^t_n,u^t_{i',j',1},v) \overset{(\ref{eq:dist-utij1-to-utn})}{=} (n-i')y - j' + \dist(u^t_{i',j',1},v)$.
			Assuming~$\dist(u^t_n, u) = \dist(u^t_n, v)$, we have~$j = i'+1$.
			Hence, 
			\begin{align*}
				(n-j)y + i + x & = (n-j+1)y - j' + \dist(u^t_{i',j',1},v) \\
				\dist(u^t_{i',j',1},v) & = x + i + j' - y.
			\end{align*}
			Since~$x < y$ it follows that~$\dist(u^t_{i',j',1},v) < i + j'$.
			Hence, 
			\begin{align*}
				\dist(u^t_1,v) 	& = \dist(u^t_1,u^t_{i',j',1},v) \\
								& \overset{(\ref{eq:dist-utij1-to-ut1})}{=} (i'-1)y + j' + x + i + j' - y \\
								& = (j-3)y+2j'+i+x < (j-1)y - i  \\
								& \overset{(\ref{eq:dist-utij2-to-ut1})}{=} \dist(u^t_1,u^t_{i,j,2})  < \dist(u^t_1,u).
			\end{align*}
			Thus, $u$ and~$v$ are separated.

			\noindent \emph{Subcase 2:} $u \in \BR$.\\
			Again, it remains to consider~$v \in \TL'\cup \BL'$.
			Hence, $\dist(u^b_n, u) = \dist(u^b_n, r^j_i, u) \overset{(\ref{eq:dist-rij-to-ubn})}{=} (n-j+1)y + n - i + 1 + x$ where~$x = \dist(u,u^t_{i,j,2}) < y$.
			It follows that~$v \in \BL'$ and~$\dist(u^b_n,v) = \dist(u^b_n, r^{i'}_{j'},v) \overset{(\ref{eq:dist-rij-to-ubn})}{=} (n-i'+1)y + n - j' + 1 + \dist(r^{i'}_{j'},v)$.
			Assuming~$\dist(u^b_n, u) = \dist(u^b_n, v)$, we have~$j = i'+1$.
			Hence, 
			\begin{align*}
				(n-j+1)y + n - i + 1 + x 	& = (n-j+2)y + n \\&- j'  + 1 + \dist(r^{i'}_{j'},v) \\
				\dist(r^{i'}_{j'},v) 		& = x - i + j' - y.
			\end{align*}
			Since~$x < y$ it follows that~$\dist(r^{i'}_{j'},v) < j' - i$.
			Hence,
			\begin{align*}
				\dist(u^b_1,v) 	& = \dist(u^b_1,r^{i'}_{j'},v) \\
								& \overset{(\ref{eq:dist-rij-to-ub1})}{=} i'y + n - j' + 1 + x - i + j' - y \\
								& = (j-2)y + n - i + 1 + x \\& < jy + n - i + 1  \\
								& \overset{(\ref{eq:dist-rij-to-ub1})}{=} \dist(u^b_1,r^j_i) < \dist(u^b_1,u).
			\end{align*}
			Thus, $u$ and~$v$ are separated.

		\smallskip
		\noindent \textbf{Case 3 $i < i' < j' < j$:}\\
		If $u\in \TR\cup \BR$, then since $j'<j$ by Claim~2 it follows that~$\{u,v\}$ are separated.
		It thus remains the case where $u\in \M$.
		If $v\in \TL'\cup \TR'\cup \BL'\cup \BR'$, then Claim~4~\&~5 prove that $\{u,v\}$ are separated. Thus let $v\in \M'$ and $i+j=i'+j'$ (otherwise they would be separated by Claim~2).

		Assume that~$\dist(u,u^t_1) = \dist(v,u^t_1)$
			and let~$x = \dist(w^{i,j}_1,u)$. From $\dist(u^t_1,w^{i',j'}_1)-\dist(u^t_1,w^{i,j}_1)=(i'y+j')-(iy+j)=(i'-i)y+j'-j$ it follows that~$\dist(w^{i',j'}_1,v) = x - j' + j - (i'-i)y$.
			
			From \autoref{eq:dist-rij-to-ub1} follows~$\dist(u^b_1,u) = \dist(u^b_1,r^i_j)+y+x = iy+n-j+1+y+x$.
			Furthermore, it follows that
			\begin{align*}
				\dist(u^b_1,v) 	& = \dist(u^b_1,r^{i'}_{j'})+y+x-j'+j-(i'-i)y \\
				& = (i+1)y + n + j - 2j' + 1 + x. 
			\end{align*}
			Thus, $\dist(u^b_1,u) - \dist(u^b_1,v)  = (i+1)y+n-j+1+x  - ((i+1)y + n + j - 2j' + 1 + x)= 2j' - 2j$.
			Since~$j \neq j'$ it follows that~$u$ and~$v$ are separated.\oldqed
\end{proof}
}
We now have all the ingredients to prove \autoref{prop:all-but-vertex-is-separated}. 
%
{\begin{proof}[Proof of \autoref{prop:all-but-vertex-is-separated}]
  We will show that for each pair of vertices in~$G'$ except for a pair $\{l^i_j,r^i_j\}$ for some $1\le i,j\le n$ there is a vertex in~$\{u^t_\ell,u^t_r,u^b_\ell,u^b_r\}$ that separates it. We have three groups of vertices in~$G'$ namely vertex-gadget-vertices, edge-gadget vertices, and vertices in the skeletal structure (consisting of top- and bottom-line). Next, we shall show that each of them is separated from all others.
 
 \emph{Skeletal vertices:} We prove that all vertices on the top- and bottom-line vertices are separated from all others. 
 First, for each vertex pair from the skeletal structure by \autoref{lem:skel-distances} there is a shortest path between two vertices from $\{u^t_\ell,u^t_r,u^b_\ell,u^b_r\}$ that contains both vertices, implying that they are separated.
 
 Next, consider a vertex pair $\{u,v\}$ where~$u$ is contained in the skeletal structure and~$v$ is contained in a vertex-gadget~$g^V_s$. More specifically, let $u$~be on the top-line between~$u^t_i$ and~$u^t_{i+1}$ (the proof is completely analogous for the bottom-line). If $i\le s$, then by \autoref{lem:skel-distances}(\ref{prop:dist-on-one-line}) the following path is a shortest path $u^t_\ell-u^t_{i+1}-u^t_{s}-a^t_s-v-a^b_s-u^b_s-u^b_{s+1}-u^b_r$.
  Symmetrically, if $i>s$, then the following is a shortest path $u^b_\ell-u^b_s-a^b_s-v-a^t_s-u^t_s-u^t_{i}-u^t_r$. 
In both cases the vertex pair lie on a shortest path starting in $\{u^t_\ell,u^t_r,u^b_\ell,u^b_r\}$ and thus is separated.

 Finally, by \autoref{lem:skel-distances} for each vertex~$u$ in the skeletal structure there is a path from $u^t_\ell$ to~$u^b_r$ that contains~$u$ and there is no shortest path containing any edge-gadget vertex, implying that~$u$ is separated by~$u^t_\ell$ or~$u^b_r$ from all edge-gadget vertices.

\emph{Vertex-gadget:} By the argument above, vertex-gadget vertices are separated from vertices in the skeletal structure. Furthermore, for each vertex-gadget vertex~$v$ there is a shortest path from~$u^t_\ell$ to~$u^b_r$ via~$v$ and no shortest path between them contains any edge-gadget vertex, implying that $u^t_\ell$ or $u^b_r$ separate~$v$ from all edge-gadget vertices.

It remains to prove that any vertex-gadget vertex~$v$ is separated from any other vertex-gadget vertex~$v'$ except in the case that they correspond to a pair $\{l^i_j,r^i_j\}$. Consider first the subcase where $v$ and~$v'$ are contained in the same vertex-gadget, say~$g^v_i$. Then, by \autoref{lem:skel-distances}(\ref{prop:dist-on-different-lines}) the following is a shortest path:
$u^t_\ell-u^t_s-a^t_s-r^s_1-r^s_2-\ldots-r^s_n-a^b_s-u^b_s-u^b_r.
$
Clearly, the subpath $r^s_1-r^s_2-\ldots-r^s_n$ can be exchanged by $l^s_1-l^s_2-\ldots-l^s_n$. This implies that~$v$ and~$v'$ are separated.

Consider the subcase where $v$ and $v'$ are in different vertex-gadgets, say $v\in g^V_i$ and $v'\in g^V_j$ with $i<j$. By \autoref{lem:skel-distances}(\ref{prop:dist-on-different-lines}) the following paths are shortest paths:
$u^t_\ell-u^t_i-a^t_i-v$ and
$u^t_\ell-u^t_j-a^t_j-v'$.
Thus $\dist(u^t_\ell,v)\le 2+(i-1)y+y+n$ and $\dist(u^t_\ell,v')>(j-1)y+y$, implying that $\dist(u^t_\ell,v)\neq \dist(u^t_\ell,v')$.

 \emph{Edge-gadget:} Because of the above considerations it is enough to prove that edge-gadget vertices are separated from other edge-gadget vertices. This is done by  \autoref{lem:twoEdgeGadgetSeparated}. 
\oldqed
\end{proof}
}

\subsection{Correctness of the Reduction}\label{subsec:correctness}
Based on \autoref{obs:traversing-edge-gadget}, \autoref{lem:traversing-edge-gadgets}, and \autoref{prop:all-but-vertex-is-separated} we next prove the correctness of our reduction (see \autoref{sec:construction}). For the sake of readability the proof is separated into two implications.

\begin{proposition}\label{prop:hinrichtung}
	If $(G,h)$ is a yes-instance of \BipDomSet, then $(G',k)$ is a yes-instance of \MD.
\end{proposition}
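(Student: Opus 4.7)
The plan is to construct an explicit metric basis of size $k=h+4$. Let $D\subseteq V_1\cup V_2$ be a dominating set of~$G$ of size~$h$, and put
\[
  L\;:=\;\{u^t_\ell,\,u^t_r,\,u^b_\ell,\,u^b_r\}\;\cup\;\{r^i_1:v_i\in D\},
\]
so $|L|=h+4=k$. By \autoref{prop:all-but-vertex-is-separated}, the four corner vertices already separate every pair in~$G'$ except the pairs $\{l^i_j,r^i_j\}$ with $1\le i,j\le n$. Hence it suffices to show that for every such pair some $r^{i'}_1\in L$ separates it. Since $D$ dominates~$v_i$, we can fix $v_{i'}\in D$ with either $i'=i$ or $\{v_i,v_{i'}\}\in E$ and split into two cases.

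If $i'=i$, then $r^i_1\in L$ lies in the same vertex-gadget~$g^V_i$. Walking along the right path gives $\dist(r^i_1,r^i_j)=j-1$, while every path to~$l^i_j$ must leave through an anchor, which yields $\dist(r^i_1,l^i_j)=\min\{j+1,\,2n-j+1\}=j+1$. The two values differ, so~$r^i_1$ separates the pair.

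If $i'\ne i$, then by the symmetry of the construction we may assume $i<i'$, so the edge-gadget $g^E_{i,i'}$ exists, with endpoints $r^i_{i'}\in g^V_i$ and $r^{i'}_i\in g^V_{i'}$. The key step is to show that a shortest path from~$r^{i'}_1$ into~$g^V_i$ walks along the right path of~$g^V_{i'}$ from~$r^{i'}_1$ to~$r^{i'}_i$, then traverses $g^E_{i,i'}$, and enters $g^V_i$ at~$r^i_{i'}$. Indeed, any alternative routing must leave $g^V_{i'}$ through an anchor and follow the top- or bottom-line, paying at least $(i'-i+2)y$; the gadget-traversal costs only $(i-1)+(i'-i+\addFactor)y+O(n)$, which is strictly smaller because $\halfaddFactor y>2n+2$, and by \autoref{lem:traversing-edge-gadgets} no shortest path traverses a second edge-gadget. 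Once this is established, from~$r^i_{i'}$ we obtain $\dist(r^i_{i'},r^i_j)=|i'-j|$ along the right path and $\dist(r^i_{i'},l^i_j)=\min\{i'+j,\,2n-i'-j+2\}$ via one of the anchors; adding the common prefix yields
\begin{align*}
  \dist(r^{i'}_1,r^i_j)&=(i-1)+(i'-i+\addFactor)y+|i'-j|,\\
  \dist(r^{i'}_1,l^i_j)&=(i-1)+(i'-i+\addFactor)y+\min\{i'+j,\,2n-i'-j+2\}.
\end{align*}
For $1\le i',j\le n$ one checks $|i'-j|<\min\{i'+j,\,2n-i'-j+2\}$ (both inequalities amount to $\max(i',j)\le n$), so the two distances differ and $r^{i'}_1$ separates the pair.

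The main obstacle is the routing argument in the second case, namely certifying that the shortest path from~$r^{i'}_1$ into~$g^V_i$ really goes through the edge-gadget~$g^E_{i,i'}$ rather than along the skeleton or through some other edge-gadget. This is where the choice $1<\addFactor<2$ is used crucially: the $2y$ cost of leaving a vertex-gadget through an anchor must strictly dominate the $\addFactor y$ surcharge of traversing an edge-gadget, and the slack $\halfaddFactor y>2n+2$ has to absorb every $O(n)$-sized term coming from the cycles and the top-line. Once these distance formulas are established, the separation itself is an elementary inequality.
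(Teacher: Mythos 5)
Your proposal is correct and follows essentially the same route as the paper: the same basis $L$ consisting of the four degree-one corner vertices plus $r^i_1$ for each dominating-set vertex, the same appeal to \autoref{prop:all-but-vertex-is-separated} to reduce everything to the pairs $\{l^i_j,r^i_j\}$, and the same routing argument showing that the shortest path from $r^{i'}_1$ enters $g^V_i$ through the edge-gadget $g^E_{i,i'}$ at $r^i_{i'}$ (using \autoref{lem:traversing-edge-gadgets} and the $\addFactor y$ versus $2y$ comparison). Your explicit computation of $\dist(r^i_{i'},r^i_j)=|i'-j|$ versus $\dist(r^i_{i'},l^i_j)=\min\{i'+j,\,2n-i'-j+2\}$ just spells out the final step that the paper compresses into the remark that $P^l$ extended inside $g^V_i$ remains a shortest path.
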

{\begin{proof}
For a yes-instance $(G=(V,E),h)$  with $V=\{v_1,\ldots,v_n\}$ of \BipDomSet denote by $K\subseteq V$ a dominating set of size at most~$h$. We prove that the corresponding \MD instance~$(G'=(V',E'),k)$ with $k=h+4$ is also a yes-instance. More specifically, we prove that the set~$L\subseteq V'$ that contains $\{u^t_\ell,u^t_r,u^b_\ell,u^b_r\}$ and for each vertex $v_i\in K$ the vertex~$r^i_1$ is a metric basis.

\looseness=-1 By \autoref{prop:all-but-vertex-is-separated} the vertices $\{u^t_\ell,u^t_r,u^b_\ell,u^b_r\}\subseteq L$ separate all pairs of vertices in~$V'$ except the vertex pair $\{l^i_j,r^i_j\}$ for all $1\le i,j\le n$. 
Clearly, all pairs  $\{l^i_j,r^i_j\}$ are separated if $r^i_1\in L$. 
Thus, consider the case where $r^i_1\notin L$. As~$K$ is a dominating set there is a vertex~$r^{\alpha}_1\in L$ such that $\{v_i,v_\alpha\}\in E$, implying that there is an edge-gadget~$g^E_{i,\alpha}$.
Next, we prove that $r^{\alpha}_1$ separates the pair $\{l^i_j,r^i_j\}$ for all $1\le j\le n$. 
This is done by proving that $P^l:=r^{\alpha}_1-r^{\alpha}_i-w_2^{i,{\alpha}}-w_1^{i,{\alpha}}-r^i_\alpha - r^i_j$ (if $\alpha<i$ then interchange $w^{i,\alpha}_1$ and~$w^{i,\alpha}_2$) is a shortest path and all other paths between $r^{\alpha}_1$ and~$r^i_j$ are more than $\halfaddFactor y$ longer. 
Having proved this it follows that 
$P^l$ extended by a shortest path within~$g^V_i$ is also a shortest path for~$l^i_j$. 
Thus $\{l^i_j,r^i_j\}$ is separated by~$r^\alpha_1$.

The length of~$P^l$ is $(i-1) + (\alpha - 1) +(|{\alpha}-i|+\addFactor)y$. 
By \autoref{lem:traversing-edge-gadgets} each path that is at most $\addFactorMinusOne y$ longer than a shortest path from~$r^{\alpha}_1$ to~$r^i_j$ traverses at most one edge-gadget and each path that traverses an edge-gadget different from $g^E_{i,{\alpha}}$ is at least by~$2y$ longer than~$P^l$. 
Thus it remains to consider the paths from $r^{\alpha}_1$ to~$r^i_j$ that do not traverse any edge-gadget. 
There are only two of them, one following the top-line, $r^{\alpha}_1-a^t_{\alpha}-u^t_{\alpha}-u^t_i-a^t_i-r^i_j$,  and the other following the bottom-line, $r^{\alpha}_1-a^b_{\alpha}-u^b_{\alpha}-u^b_i-a^b_i-r^i_j$. 
Both are of length more than $(|{\alpha}-i)|+2)y$ and, thus, are at least $\halfaddFactor y$ longer than~$P^l$. 
Thus~$P^l$ is a shortest path, implying that $r^{\alpha}_1$ separates $\{l^i_j,r^i_j\}$.
\oldqed
\end{proof}}%
\begin{proposition}\label{prop:rueckrichtung}
	If $(G',k)$ is a yes-instance of \MD, then $(G,h)$ is a yes-instance of \BipDomSet.
\end{proposition}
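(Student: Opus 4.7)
The plan is to start from any metric basis $L$ of $G'$ with $|L|\le h+4$ and extract a dominating set $K\subseteq V$ of $G$ of size at most $h$. First I would observe that at each of the four endpoints $u^t_1,u^t_n,u^b_1,u^b_n$ the two degree-one vertices of the attached $P_3$ are false twins (every other vertex of $G'$ has the same distance to both), so $L$ must contain at least one vertex of each twin pair. This accounts for four vertices of $L$; let $L':=L\setminus\{w_1,w_2,w_3,w_4\}$ denote the remaining vertices, which satisfy $|L'|\le h$. Note that replacing any chosen twin by its partner preserves separation, so we may assume $\{u^t_\ell,u^t_r,u^b_\ell,u^b_r\}\subseteq L$, and then by \autoref{prop:all-but-vertex-is-separated} the only pairs that $L'$ still has to separate are the pairs $\{l^i_j,r^i_j\}$.

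Second, I would establish the key structural claim: \emph{if $w\in V'$ separates some pair $\{l^i_j,r^i_j\}$, then either (a) $w\in g^V_i$, or (b) $w\in g^V_\alpha$ for some $\alpha$ with $\{v_i,v_\alpha\}\in E$, or (c) $w$ lies in an edge-gadget $g^E_{i,\alpha}$ or $g^E_{\alpha,i}$ (so again $\{v_i,v_\alpha\}\in E$).} The proof of this claim is the main obstacle and proceeds by contradiction using the tools already developed. If $w$ lies in the skeletal structure, or in a vertex-gadget $g^V_\alpha$ with $\alpha\neq i$ and $\{v_i,v_\alpha\}\notin E$, or in an edge-gadget not incident to $g^V_i$, then \autoref{obs:traversing-edge-gadget}, \autoref{lem:traversing-edge-gadgets}, and \autoref{lem:skel-distances} force every shortest path from $w$ into $g^V_i$ to enter through one of the anchors $a^t_i$ or $a^b_i$. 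Since the even cycle of length $2n+2$ forming $g^V_i$ is symmetric with respect to its anchors and $l^i_j$, $r^i_j$ lie at equal distance from both, one obtains $\dist(w,l^i_j)=\dist(w,r^i_j)$, contradicting separation. Cases (a)–(c) are the only ones in which a shortest path can enter $g^V_i$ through the right-hand entrance $r^i_\alpha$, breaking the symmetry.

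Third, with the key claim in hand I would define a map $f\colon L'\to V$ by $f(w):=v_\alpha$ if $w\in g^V_\alpha$, $f(w):=v_\alpha$ (choosing either endpoint) if $w\in g^E_{\alpha,\beta}$, and $f(w)$ arbitrary otherwise, and set $K:=f(L')$, so $|K|\le|L'|\le h$. To see that $K$ dominates $G$, fix any $v_i\in V$ and consider the pair $\{l^i_1,r^i_1\}$. By \autoref{prop:all-but-vertex-is-separated} it is not separated by any of $u^t_\ell,u^t_r,u^b_\ell,u^b_r$, hence some $w\in L'$ separates it. The key claim then yields $f(w)\in N[v_i]$, so $v_i$ is dominated. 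This proves that $K$ is a dominating set of size at most $h$, completing the proof.
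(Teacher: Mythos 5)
Your overall architecture is the same as the paper's: force one vertex of each degree-one twin pair at the four endpoints, invoke \autoref{prop:all-but-vertex-is-separated} to reduce everything to the pairs $\{l^i_j,r^i_j\}$, argue that a separator of such a pair must be ``local'' to $g^V_i$ (because any path entering $g^V_i$ through an anchor sees $l^i_j$ and $r^i_j$ at equal distance), and project the at most $h$ remaining basis vertices onto a dominating set. Your observation that either endpoint of an edge-gadget may be chosen for the projection is correct, since $v_\alpha$ dominates both $v_\alpha$ and $v_\beta$ whenever $g^E_{\alpha,\beta}$ exists; the paper's more careful $\TL/\BL$ versus $\TR/\BR$ choice is not needed for this direction.

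The one place where your sketch undersells the work is the proof of your key structural claim in the case that $w$ lies in an edge-gadget not incident to $g^V_i$ --- concretely, $w\in g^E_{j,\alpha}$ where $v_j$ is adjacent to $v_i$ but $\alpha\neq i$. Here a candidate shortest path from $w$ to $r^i_1$ could leave $g^E_{j,\alpha}$ at $r^j_\alpha$, cross $g^V_j$ to $r^j_i$, and traverse $g^E_{i,j}$ to enter $g^V_i$ at the non-anchor entrance $r^i_j$. This path traverses only \emph{one} edge-gadget (it merely starts inside $g^E_{j,\alpha}$ without traversing it), so \autoref{lem:traversing-edge-gadgets} does not rule it out, and \autoref{obs:traversing-edge-gadget} and \autoref{lem:skel-distances} do not directly apply either. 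One needs an explicit length comparison: such a path has length at least $(|j-i|+1+\addFactor)y$ measured from $w^{j,\alpha}_{1}$ or $w^{j,\alpha}_{2}$, whereas routing from there to $u^t_j$ (cost at most $y+n$) and then along the top-line to $a^t_i$ and $r^i_1$ (cost $(|j-i|+1)y+1$) is shorter by about $\addFactorMinusOne y - n$, so the true shortest path does enter via an anchor and $w$ cannot separate the pair. With that computation supplied, your claim and hence your proof go through; without it, the trichotomy in your second paragraph is asserted rather than proved.
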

{\begin{proof}
Let $(G'=(V',E'),k)$ by a yes-instance of~\MD where $G'$ is constructed from the \BipDomSet instance~$(G=(V,E),h)$ with $k=h+4$ and $V=\{v_1,\ldots,v_n\}$. 
Furthermore, let~$L$ be a metric basis of~$G'$ of size at most~$k$. As already argued, $L$~contains at least one degree-one neighbor of each of the endpoints $\{u^t_1,u^t_n,u^b_1,u^b_n\}$ (otherwise the degree-one neighbors would not be separated). 
Then \autoref{prop:all-but-vertex-is-separated} proves that these degree-one neighbors separate all vertices in~$G'$ except the vertex pairs $\{l^i_j,r^i_j\}$ for all $1\le i,j\le n$.

We now form a vertex subset~$K\subseteq V$ and prove that it is a dominating set of size at most~$h$: 
For each vertex $v\in L$ that is contained in a vertex-gadget~$g^V_i$ add $v_i\in V$ to~$K$. 
Additionally, for each vertex $v\in L$ contained in an edge-gadget~$g^E_{i,j}$ with $i<j$ add $v_i$ to~$L$ if $v$ is contained on the $\TL$- or $\BL$-part of~$g^E_{i,j}$ and add $v_j$ to~$L$ in all other cases.

We next prove that~$K$ is a dominating set for~$G$. 
Suppose towards a contradiction that there is a vertex~$v_i\in V$ that is not dominated by~$K$. 
By definition of~$K$ none of the vertices in~$g^V_i$ is contained in the metric basis~$L$. 
However, there is one vertex~$u\in L$ that separates $\{l^i_1,r^i_1\}$. 
Denote by $\P^l$ the set of all shortest paths from~$u$ to~$l^i_1$ and by $\P^r$ the set of all shortest paths from $u$ to $r^i_1$. 
Observe that~$l^i_1$ and~$r^i_1$ both have the same distance to~$a^t_i$ and~$a^b_i$ and that each path in~$\P^l$ either contains~$a^t_i$ or~$a^b_i$.
Thus all paths in~$\P^r$ neither contain~$a^t_i$ nor~$a^b_i$, since otherwise~$l^i_1$ and~$r^i_1$ would not have been separated by~$u$.
Hence, each path in~$\P^r$ enters~$g^V_i$ via an entrance vertex~$r^i_j$ of an edge-gadget~$g^E_{i,j}$. 
If $u$ is contained either directly in one of these edge-gadgets or it is contained in~$g^V_j$, then by the construction of~$K$ this implies that either $v_i$ or~$v_j$ is contained in~$K$. 
This yields a contradiction since $\{v_i,v_j\}\in E$ and thus $v_i$ is dominated.

Towards a contradiction, consider a shortest path in~$P\in \P^r$ entering $g^V_i$ via~$r^i_j$ but $u$~is  neither contained in $g^E_{i,j}$ nor in~$g^V_j$. 
Clearly, by \autoref{obs:traversing-edge-gadget} it follows that $P$~traverses~$g^E_{i,j}$.
Thus, $P$ enters~$g^E_{i,j}$ via~$u^t_{i,j,2}$ ($u^t_{i,j,1}$ if~$i > j$) or~$r^j_i$.
However, by \autoref{lem:skel-distances} the shortest path from~$u^t_{i,j,2}$ ($u^t_{i,j,1}$) to~$r^i_j$ contains~$a^t_i$ ($a^b_i$), implying a contradiction in the first case.
Hence, we can assume that~$P$ enters~$g^E_{i,j}$ via~$r^j_i$.
By \autoref{lem:traversing-edge-gadgets}  it traverses only~$g^E_{i,j}$, implying that it enters $g^V_j$ either via an anchor or via some $r^j_\alpha$. 
If~$P$ enters~$g^V_j$ via the anchor~$a^t_j$ ($a^b_j$) this implies that the path from~$u$ to~$r^i_1$ contains~$u^t_j$ ($u^b_j$). 
However, by \autoref{lem:skel-distances} the shortest path from~$u^t_j$ ($u^b_j$) to~$r^i_1$ contains~$a^t_i$ ($a^b_i$), yielding a contradiction.
In the remaining case the path from~$u$ to~$r^i_1$ enters~$g^V_j$ via~$r^j_\alpha$ and since it traverses only~$g^E_{i,j}$, this implies that~$u$ is contained in~$g^E_{j,\alpha}$. 
In addition, by the construction of~$K$ it follows that~$u$ has distance greater than~$y$ to~$r^j_\alpha$ and, hence, $P$ contains~$w^{j,\alpha}_1$ or~$w^{j,\alpha}_2$. 
The subpath from~$w^{j,\alpha}_1$ or from $w^{j,\alpha}_2$ to~$r^i_1$ is of length at least $(1+|j-i|+\addFactor)y$. 
However, either $w^{i,j}_1$ or $w^{i,j}_2$ has distance at most $y+n$ to~$u^t_j$ and $\dist(u^t_j,r^i_1)=(|j-i|+1)y+1$, implying that~$P$ is not a shortest path.\oldqed
\end{proof}}%

\autoref{prop:hinrichtung} together with \autoref{prop:rueckrichtung} imply that our reduction given in \autoref{sec:construction} is correct. 
Additionally, observe that the maximum degree in any graph constructed by our reduction is three. In the remaining part we discuss the computation lower bounds that are implied by it.

\section{W[2]-Completeness}\label{sec:w2-completness}

In the previous section we proved the correctness of our reduction which maps an instance $(G,h)$ of \BipDomSet into an instance $(G',k)$  of \MD with $k=h+4$. 
Since \BipDomSet is W[2]-hard with respect to~$h$~\cite{RS08Algorithmica}, this implies that \MD is W[2]-hard with respect to~$k$ on maximum degree three graphs. 
Note that this classification is tight in the sense that \MD is (trivially) polynomial-time solvable on graphs with maximum degree two. 
We prove in this section that \MD is indeed W[2]-complete.

\begin{theorem}\label{thm:w2-complete}
	\MD on graphs with bounded degree three is  W[2]-complete with respect to the parameter size of a metric basis.
\end{theorem}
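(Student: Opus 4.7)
The W[2]-hardness direction has already been accomplished: the reduction of the previous section maps \BipDomSet (which is W[2]-hard with respect to the size of the dominating set by \citet{RS08Algorithmica}) in polynomial time to \MD on maximum-degree-three graphs, with $k = h + 4$, and \autoref{prop:hinrichtung} together with \autoref{prop:rueckrichtung} show that it is a valid parameterized reduction. Thus it remains only to establish membership of \MD in W[2], which can be done uniformly for all graphs (the bounded-degree restriction only makes the problem easier and certainly does not push it out of W[2]).

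The plan is to give a trivial polynomial-time parameterized reduction from \MD (on arbitrary graphs) to \textsc{Hitting Set} parameterized by the solution size, which is a textbook W[2]-complete problem. Given an \MD instance $(G=(V,E),k)$, first compute all pairwise distances in~$G$ in polynomial time. Then, for each unordered pair $\{u,w\}\subseteq V$ of distinct vertices, form the set
\[
S_{u,w} \;:=\; \{\,v\in V \;:\; \dist(v,u)\neq \dist(v,w)\,\},
\]
and output the \textsc{Hitting Set} instance with universe~$V$, set system $\mathcal{S}=\{S_{u,w} : u\neq w\}$, and parameter $k'=k$.

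By the very definition of ``separating'' used in the introduction, a vertex~$v$ separates the pair $\{u,w\}$ if and only if $v\in S_{u,w}$. Hence a subset $L\subseteq V$ is a metric basis of~$G$ of size at most~$k$ if and only if $L$ is a hitting set for~$\mathcal{S}$ of size at most~$k'=k$. The reduction runs in polynomial time and leaves the parameter unchanged, so it is a parameterized reduction; combined with the W[2]-membership of \textsc{Hitting Set}, this shows \MD$\in$W[2]. Together with the W[2]-hardness on maximum-degree-three graphs from the previous section, the theorem follows. There is no substantive obstacle in this step; the whole content of \autoref{thm:w2-complete} beyond hardness is the observation that separation pairs are merely ``forbidden configurations'' that a metric basis must hit, so the problem fits naturally into the covering-type framework that defines W[2].
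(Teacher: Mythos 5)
Your proposal is correct and matches the paper's argument essentially verbatim: hardness from the \BipDomSet reduction, and membership via encoding the separation constraints as a covering problem. The paper phrases the membership reduction as one to \RedBlueDomSet rather than \textsc{Hitting Set}, but these are the same W[2]-complete problem under a different encoding (your set $S_{u,w}$ is exactly the neighborhood of the blue vertex $\alpha_{u,w}$ in the paper's bipartite graph), so there is no substantive difference.
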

{\begin{proof}
The W[2]-hardness follows from the discussion above.
Hence, it remains to show containment in W[2].
This is done by giving a parameterized reduction to the W[2]-complete \RedBlueDomSet problem~\cite{DF99}: 
Given a bipartite graph~$(R\cup B,E)$ and an integer~$h \geq 1$ it is asked whether there is a size at most~$h$ vertex subset $D\subseteq R$ that dominates all vertices in~$B$.
%
 
For an instance $(G=(V,E),k)$ of \MD we construct an equivalent \RedBlueDomSet instance $(G'=(R\cup B,E'),k)$ as follows: First the vertex set~$B$ is formed by inserting for each vertex pair $\{u,w\}\subseteq V$ a vertex $\alpha_{u,w}$. Then $R$ is a copy of~$V$ and there is an edge between $v\in R$ and $\alpha_{u,w}\in B$ if $\dist(v,u)\neq \dist(v,w)$. 
It is straightforward to argue that there is a one-to-one correspondence between the vertices in a metric basis for~$G$ and a red-blue dominating set in~$G'$.\oldqed
\end{proof}
}

\section{Running Time and Approximation Lower Bounds}\label{sec:run-lower-bound}
\looseness=-1 We next show a running time as well as an approximation lower bound for \MD.
 
\citet{CCF+05} proved that \DomSet (given an $n$-vertex graph, decide whether it has a size-$h$ dominating set) cannot be solved in $n^{o(h)}$ time, unless $\text{FPT}=\text{W[1]}$. 
By the details of the reduction in~\cite{RS08Algorithmica} (there is a one-to-one correspondence between the solution sets) this also holds for \BipDomSet. 
This implies together with the observation that the parameter~$k$ in our reduction (see \autoref{sec:construction}) is linearly upper-bounded  by the parameter~$h$ from the \BipDomSet instance where we reduce from, the same running-time lower bound for \MD.

\begin{theorem}\label{thm:eth-lower-bound}
 Unless $\text{FPT}=\text{W[1]}$, \MD cannot be solved in $n^{o(k)}$ time, even on maximum degree three graphs.
\end{theorem}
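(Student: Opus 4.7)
The plan is a standard reduction-based lower bound argument, exploiting the fact that the reduction from Section~\ref{sec:construction} preserves the parameter up to an additive constant. The starting point is the result of \citet{CCF+05} that \DomSet on $n$-vertex graphs cannot be solved in $n^{o(h)}$ time unless $\text{FPT}=\text{W[1]}$. The first step I would carry out is to transfer this lower bound to \BipDomSet. Looking at the reduction from \DomSet to \BipDomSet in~\cite{RS08Algorithmica}, one sees that the solution size is preserved exactly (there is a one-to-one correspondence between solutions) and the vertex count increases only polynomially; hence an $n^{o(h)}$ algorithm for \BipDomSet would pull back to an $n^{o(h)}$ algorithm for \DomSet.

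The next step is to combine this with the reduction of Section~\ref{sec:construction}, whose correctness was established in \autoref{prop:hinrichtung} and \autoref{prop:rueckrichtung}. Given a \BipDomSet instance $(G,h)$ with $|V(G)|=n$, the reduction produces in polynomial time an equivalent \MD instance $(G',k)$ with $k=h+4$, with $G'$ of maximum degree three, and with $|V(G')|=n^{O(1)}$ (the $y$-paths have length $y=10n^2$, and there are $O(n^2)$ of them, so the total vertex count is polynomial in $n$). The essential point is the additive relationship $k=h+4$.

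Now suppose towards a contradiction that \MD on maximum-degree-three graphs could be solved in $N^{o(k)}$ time on $N$-vertex instances. Running this hypothetical algorithm on $(G',k)$ after applying the reduction yields a decision procedure for \BipDomSet with running time
\[
\poly(n) + |V(G')|^{o(k)} \;=\; n^{O(1)} + \bigl(n^{O(1)}\bigr)^{o(h+4)} \;=\; n^{o(h)},
\]
contradicting the lower bound of the previous paragraph. This gives the claimed statement.

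There is essentially no real obstacle here; the argument is a textbook composition of lower-bound reductions. The only thing worth checking carefully is that the $o$-asymptotics do what one wants: since $|V(G')|$ is polynomial in $n$, any exponent of the form $o(k)\cdot O(1)$ remains $o(k)=o(h+4)=o(h)$, and hence the hypothetical \MD-algorithm really does beat the known \BipDomSet lower bound. This is exactly where the additive relation $k=h+4$ (as opposed to, say, a multiplicative blow-up or a $k=h^{O(1)}$ blow-up) is used, and it is worth highlighting this in the write-up.
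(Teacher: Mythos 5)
Your argument is correct and follows essentially the same route as the paper: transfer the $n^{o(h)}$ lower bound from \DomSet to \BipDomSet via the solution-preserving reduction of~\cite{RS08Algorithmica}, then compose with the reduction of \autoref{sec:construction}, using that $k=h+4$ and that the constructed graph has polynomially many vertices. No gaps.
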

Note that the lower bound provided by \autoref{thm:eth-lower-bound} is asymptotically tight in the sense that a trivial brute-force algorithm that tests each size-$k$ vertex subset whether it is a metric basis achieves a running time of~$O(n^{k+2})$.

Additionally, observe that the proof of \autoref{prop:rueckrichtung} also provides a one-to-one correspondence between a metric basis and a dominating set in the instance where we reduce from. Moreover, our reduction can be computed in polynomial time. The reduction from \DomSet to \BipDomSet~\cite{RS08Algorithmica} also admits these two properties. Thus, the result that \DomSet cannot be approximated within~$o(\log n)$, unless $\text{NP}=\text{P}$~\cite{AS03}, transfers to \MD.

\begin{theorem}
 Unless $\text{NP}=\text{P}$, \MD on maximum degree three graphs cannot be approximated within a factor of~$o(\log n)$.
\end{theorem}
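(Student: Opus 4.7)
The plan is to compose approximation-preserving reductions through the chain \DomSet $\to$ \BipDomSet $\to$ \MD and then invoke the $o(\log n)$-inapproximability of \DomSet from~\cite{AS03}. The key point is that our reduction in \autoref{sec:construction} is not merely a parameterized reduction but an \emph{exact} one on objective values, modulo the additive constant~$4$.

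First I would extract from \autoref{prop:hinrichtung} and \autoref{prop:rueckrichtung} the explicit value-preservation. The forward direction gives, for any dominating set of size~$h$ in~$G$, a metric basis of size exactly $h+4$ in~$G'$, so $\mathrm{opt}_{\MD}(G') \le \mathrm{opt}_{\text{BDS}}(G)+4$. The backward direction is stronger than just a YES/NO translation: the construction of~$K$ in the proof of \autoref{prop:rueckrichtung} is polynomial-time and produces, from any metric basis~$L$, a dominating set of size at most $|L|-4$ (the four degree-one neighbors of $u^t_1,u^t_n,u^b_1,u^b_n$ must belong to~$L$ but are never added to~$K$). Hence $\mathrm{opt}_{\MD}(G') = \mathrm{opt}_{\text{BDS}}(G)+4$, and every feasible metric basis of size~$s$ maps in polynomial time to a feasible dominating set of size $s-4$.

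Next I would observe that the reduction of~\cite{RS08Algorithmica} from \DomSet to \BipDomSet has the same strong one-to-one property (this is already exploited in~\autoref{thm:eth-lower-bound} and in the running-time lower bound). Composing both reductions turns an $n_0$-vertex instance $G_0$ of \DomSet into an $n'=\mathrm{poly}(n_0)$-vertex max-degree-three instance $G'$ of \MD with $\mathrm{opt}_{\MD}(G') = \mathrm{opt}_{\DomSet}(G_0) + c$ for an absolute constant~$c$, together with a polynomial-time back-translation losing at most~$c$ in the objective.

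Finally, suppose towards a contradiction there is a polynomial-time $\alpha(n)$-approximation for \MD on max-degree-three graphs with $\alpha(n)=o(\log n)$. Given $G_0$, I would build $G'$, run the approximation, and back-translate to a dominating set of $G_0$ of size at most
\[
\alpha(n')\bigl(\mathrm{opt}_{\DomSet}(G_0)+c\bigr)-c \le (c+1)\,\alpha(n')\cdot\mathrm{opt}_{\DomSet}(G_0),
\]
valid whenever $\mathrm{opt}_{\DomSet}(G_0)\ge 1$. Because $n'=\mathrm{poly}(n_0)$, one has $\log n' = \Theta(\log n_0)$, so the resulting approximation ratio is $o(\log n_0)$, contradicting~\cite{AS03}. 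The only non-routine point will be absorbing the additive constant~$c$ into the $o(\log n)$ factor; this is handled by noting that if $\mathrm{opt}_{\DomSet}(G_0)$ is bounded by a constant the problem is polynomial-time solvable, and otherwise the constant offset is asymptotically swallowed. Everything else is bookkeeping.
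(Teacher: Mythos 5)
Your proposal is correct and follows essentially the same route as the paper: the paper's own (very terse) proof likewise observes that \autoref{prop:rueckrichtung} yields a solution-preserving back-translation, that the reduction and the \DomSet-to-\BipDomSet reduction of~\cite{RS08Algorithmica} are both polynomial-time and cost-preserving up to an additive constant, and then transfers the $o(\log n)$-inapproximability of \DomSet from~\cite{AS03}. You merely spell out the bookkeeping (the $+4$ offset, the $\log n' = \Theta(\log n_0)$ observation, and the absorption of the additive constant) that the paper leaves implicit.
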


\section{Conclusion}
 We have shown that \MD is W[2]-complete even on graphs with maximum degree three. 
By modifying our construction appropriately
we conjecture that it is possible to show that \MD is W[2]-complete even on bipartite graphs with maximum degree three.

We performed a first step towards a systematic study of the parameterized complexity of \MD. From our perspective, the most interesting questions that arise is whether \MD is fixed-parameter tractable on planar graphs or with respect to the treewidth of the input graph.
By simple observations on vertices with the same neighborhood, it is straightforward to argue that \MD is fixed-parameter tractable with respect to the size of a vertex cover. 
This motivates a systematic study of ``stronger parameterizations''~\cite{KN12}, for instance the size of a feedback vertex set.
Finally, we would like to mention the  open question whether the $2^{o(n)}$ lower bound for \DomSet (unless the exponential time hypothesis fails) can be transfered to~\MD. 

\paragraph{\bf Acknowledgements.}
We thank Rolf Niedermeier for helpful comments improving the presentation.

{\footnotesize
\bibliographystyle{abbrvnat}
\bibliography{bibliography}
}
\appendix
\newpage
\section{Proofs}

\appendixProofText
\appendixProofTextT
\end{document}